\documentclass[11pt,a4paper]{article}

\usepackage{amsmath,amsfonts,amssymb,amsthm}

\usepackage{backref}

\usepackage{graphicx,color}
\usepackage{boxedminipage}
\tolerance2000
\newtheorem{theorem}{Theorem}

\newtheorem{corollary}{Corollary}
\newtheorem{lemma}{Lemma}

\DeclareMathOperator{\operatorClassNP}{NP}
\newcommand{\classNP}{\ensuremath{\operatorClassNP}}
\DeclareMathOperator{\operatorClassCoNP}{coNP}
\newcommand{\classCoNP}{\ensuremath{\operatorClassCoNP}}
\DeclareMathOperator{\operatorClassFPT}{FPT}
\newcommand{\classFPT}{\ensuremath{\operatorClassFPT}}
\DeclareMathOperator{\operatorClassW}{W}
\newcommand{\classW}[1]{\ensuremath{\operatorClassW[#1]}}


\pagestyle{plain}

\begin{document}

\title{Editing to a Graph of Given Degrees\footnote{The research leading to these results has received funding from the European Research Council under the European Union's Seventh Framework Programme (FP/2007-2013)/ERC Grant Agreement n. 267959. Preliminary version of the paper appeared in the proceeding of IPEC 2014.}}

\author{Petr A. Golovach\thanks{Department of Informatics, University of Bergen, PB 7803, 5020 Bergen, Norway. E-mail: {\tt{petr.golovach@ii.uib.no}}}}

\date{}

\maketitle

\begin{abstract}
 We consider the \textsc{Editing to a Graph of Given Degrees} problem that asks for a graph $G$, non-negative integers $d,k$ and a function $\delta\colon V(G)\rightarrow\{1,\ldots,d\}$, whether it is  possible to obtain a graph $G'$ from $G$ such
 that  the degree of $v$ is $\delta(v)$ for any vertex $v$ by at most $k$ vertex or edge deletions or edge additions.
We construct an \classFPT-algorithm for  \textsc{Editing to a  Graph of Given Degrees}  parameterized by $d+k$. We complement this result by showing that the problem has no polynomial kernel unless  $\classNP\subseteq\classCoNP/\text{\rm poly}$.
\end{abstract}

\section{Introduction}
The aim of graph editing or modification problems is to change a given graph  
by applying a bounded number of specified operations in order to satisfy a certain
property. 
Many basic problems like {\sc Clique}, {\sc Independent Set} or {\sc Feedback (Edge or Vertex) Set}  can be seen as graph editing problems. 
It is common to allow combinations of vertex deletions, edge deletions and edge additions,  but other operations, like edge contractions, are considered as well.

The systematic study of the vertex deletion problems was initiated by Lewis and Yannakakis~\cite{LewisY80}. They considered hereditary non-trivial properties. 
A property is hereditary if it holds for any induced subgraph of a graph that satisfy the property, and a property is non-trivial if it is true for infinitely many graphs and false for infinitely many graphs.  Lewis and Yannakakis~\cite{LewisY80} proved that for any non-trivial hereditary property,  the corresponding vertex deletion problem is \classNP-hard, and for trivial properties the problem can be solved in polynomial time.
The edge deletion problems were considered by Yannakakis~\cite{Yannakakis78},   Alon, Shapira and Sudakov~\cite{AlonSS05}. The case when edge additions and deletions are allowed and the property is the inclusion in some hereditary graph class was considered by Natanzon, Shamir and Sharan~\cite{NatanzonSS01} and Burzyn, Bonomo and Dur{\'a}n~\cite{BurzynBD06}.

As typically graph editing problems are \classNP-hard, it is natural to use the parameterized complexity framework to analyze them. Cai~\cite{Cai96} proved that for any property defined by a finite set of forbidden induced subgraphs, the editing problem is \classFPT\ when parameterized by the bound on the number of vertex deletions, edge deletions and edge additions. Further results for hereditary properties were obtained by Khot and Raman~\cite{KhotR02}.

As it could be seen from the aforementioned results, the editing problems are well investigated for hereditary properties. For properties of other types,  a great deal less is known, and the graph editing problems where the aim is to obtain a graph that satisfies degree constraints belong to the class of graph editing problems for non-hereditary properties. Investigation of the parameterized complexity of such problems were initiated
by Moser and Thilikos in~\cite{MoserT09}, Cai and Yang~\cite{CaiY11} and Mathieson and Szeider~\cite{MathiesonS12} (see also~\cite{CyganMPPS11,Golovach14a} for related  results).

In particular,
Mathieson and Szeider~\cite{MathiesonS12} considered different variants of  the following problem:
\begin{center}
\begin{boxedminipage}{.99\textwidth}
\textsc{Editing to a Graph of Given Degrees}\\
\begin{tabular}{ r l }
\textit{~~~~Instance:} & A graph $G$, non-negative integers $d,k$ and a function\\
                                  & $\delta\colon V(G)\rightarrow\{1,\ldots,d\}$.\\
\textit{Parameter 1:} & $d$.\\                                  
\textit{Parameter 2:} & $k$.\\
\textit{Question:} & Is it possible to obtain a graph $G'$ from $G$ such
 that\\ &  $d_{G'}(v)=\delta(v)$ for each $v\in V(G')$ by at most $k$\\ 
 & operations from the set $S$?\\
\end{tabular}
\end{boxedminipage}
\end{center}
They classified the parameterized complexity of the problem for $$S\subseteq\{\text{vertex deletion},\text{edge deletion},\text{edge addition}\}.$$ 

They showed that \textsc{Editing to a Graph of Given Degrees} is \classW{1}-hard when parameterized by $k$ and the unparameterized version is \classNP-complete if  vertex deletion is in $S$.
If $S\subseteq\{\text{edge deletion},\text{edge addition}\}$, then the problem can be solved in polynomial time.   
For  $\{\text{vertex deletion}\}\subseteq S\subseteq \{\text{vertex deletion},\text{edge deletion},\text{edge addition}\}$, they proved that 
\textsc{Editing to a Graph of Given Degrees} is \emph{Fixed Parameter Tractable} (\classFPT) when parameterized by $d+k$. Moreover, the FPT result holds for a more general version of the problem where vertices and edges have costs and the degree constraints are relaxed: for each $v\in V(G')$, $d_{G'}(v)$ should be in a given set $\delta(v)\subseteq \{1,\ldots,d\}$. 
The proof given by Mathieson and Szeider~\cite{MathiesonS12}  uses a logic-based approach that does not provide practically feasible algorithms.
They used the observation that  \textsc{Editing to a Graph of Given Degrees} can be reduced to the instances with graphs whose degrees are bounded by a function of $k$ and $d$.  
By a result of Seese~\cite{Seese96}, the problem of deciding any property that can be expressed in  first-order logic
is \classFPT\ for graphs of bounded degree when parameterized by the length of the sentence defining the property. In particular, to obtain their \classFPT-result, Mathieson and Szeider constructed a non-trivial first-order logic formula that expresses the property that a graph with vertices of given degrees can be obtained by at most $k$ editing operations. 
For the case $S\subseteq\{\text{vertex deletion},\text{edge deletion}\}$, they improved the aforementioned result by showing that 
\textsc{Editing to a Graph of Given Degrees} has a polynomial kernel when parameterized by $d+k$.  Some further results were recently obtained by 
Froese, Nichterlein and Niedermeier~\cite{FroeseNN14}.

In Section~\ref{sec:FPT} we construct an \classFPT-algorithm for \textsc{Editing to a Graph of Given Degrees} parameterized by $k+d$ for the case when $S$ includes
vertex deletion and edge addition that runs in time $2^{O(kd^2+k\log k)}\cdot poly(n)$ for $n$-vertex graphs, i.e., we give the first constructive algorithm for the problem.  
Our algorithm is based on the random separation techniques introduced by Cai, Chan and Chan~\cite{CaiCC06}. 
We complement this result by showing in Section~\ref{sec:no-ker} that  \textsc{Editing to a Graph of Given Degrees}  parameterized by $k+d$ has no polynomial kernel unless  $\classNP\subseteq\classCoNP/\text{\rm poly}$ if $\{\text{vertex deletion},\text{edge addition}\}\subseteq S$. 
This resolves an open problem by Mathieson and Szeider~\cite{MathiesonS12}. The proof uses the cross-composition framework introduced by Bodlaender, Jansen and Kratsch~\cite{BodlaenderJK14}.

\section{Basic definitions and preliminaries}\label{sec:defs}

\noindent
{\bf Graphs.}
We consider only finite undirected graphs without loops or multiple
edges. The vertex set of a graph $G$ is denoted by $V(G)$ and  
the edge set  is denoted by $E(G)$.

For a set of vertices $U\subseteq V(G)$,
$G[U]$ denotes the subgraph of $G$ induced by $U$, and by $G-U$ we denote the graph obtained form $G$ by the removal of all the vertices of $U$, i.e., the subgraph of $G$ induced by $V(G)\setminus U$. If $U=\{u\}$, we write $G-u$ instead of $G-\{u\}$. 
Respectively, for a set of edges $L\subseteq E(G)$, $G[L]$ is a subgraph of $G$ induced by $L$, i.e, the vertex set of $G[L]$ is the set of vetices of $G$ incident to the edges of $L$ and $L$ is the set of edges of $G[L]$.  
For a non-empty set $U$, $\binom{U}{2}$ is the set of unordered pairs of elements of $U$.
For a set of edges $L$,
by $G-L$ we denote the graph obtained from $G$ by the removal of all the edges of $L$.
Respectively, for $L\subseteq \binom{V(G)}{2}$, $G+L$ is the graph obtained from $G$ by the addition of the edges that are elements of $L$.
If $L=\{a\}$, then for simplicity, we write $G-a$ or $G+a$.

For a vertex $v$, we denote by $N_G(v)$ its
\emph{(open) neighborhood}, that is, the set of vertices which are adjacent to $v$, and for a set $U\subseteq V(G)$, $N_G(U)=(\cup_{v\in U}N_G(v))\setminus U$.
The \emph{closed neighborhood} $N_G[v]=N_G(v)\cup \{v\}$, and for a positive integer $r$, $N_G^r[v]$ is the set of vertices at distance at most $r$ from $v$. 
For a set $U\subseteq V(G)$ and a positive integer $r$, $N_G^r[U]=\cup_{v\in U}N_G^r[v]$.
The \emph{degree} of a vertex $v$ is denoted by $d_G(v)=|N_G(v)|$.

A  \emph{walk} in $G$ is a sequence $P=v_0,e_1,v_1,e_2,\ldots,e_s,v_s$ of vertices and edges of $G$ such that $v_0,\ldots,v_s\in V(G)$, $e_1,\ldots,e_s\in E(G)$,
and for $i\in\{1,\ldots,s\}$, $e_i=v_{i-1}v_i$; $v_0,v_s$ are the \emph{end-vertices} of the walk, and $v_1,\ldots,v_{s-1}$ are the \emph{internal} vertices. 
A walk is \emph{closed} if $v_0=v_s$. 
Sometimes we write $P=v_0,\ldots,v_s$ to denote a walk $P=v_0,e_1,\ldots,e_s,v_s$ omitting edges.
A walk is a \emph{trail} if $e_a,\ldots,e_s$ are pairwise distinct, and a trail is a \emph{path} if $v_0,\ldots,v_s$ are pairwise distinct except maybe $v_0,v_s$.


\medskip
\noindent
{\bf Parameterized Complexity.}
Parameterized complexity is a two dimensional framework
for studying the computational complexity of a problem. One dimension is the input size
$n$ and another one is a parameter $k$. It is said that a problem is \emph{fixed parameter tractable} (or \classFPT), if it can be solved in time $f(k)\cdot n^{O(1)}$ for some function $f$.
A \emph{kernelization} for a parameterized problem is a polynomial algorithm that maps each instance $(x,k)$ with the input $x$ and the parameter $k$ to an instance $(x',k')$ such that i) $(x,k)$ is a YES-instance if and only if $(x',k')$ is a YES-instance of the problem, and ii) the size of $x'$ is bounded by $f(k)$ for a computable function $f$. 
The output $(x',k')$ is called a \emph{kernel}. The function $f$ is said to be a \emph{size} of a kernel. Respectively, a kernel is \emph{polynomial} if $f$ is polynomial. 
We refer to the books of Downey and Fellows~\cite{DowneyF99}, 
Flum and Grohe~\cite{FlumG06}, and   Niedermeier~\cite{Niedermeierbook06} for  detailed introductions  to parameterized complexity. 

\medskip
\noindent
{\bf Solutions of \textsc{Editing to a Graph of Given Degrees}.}
Let $(G,\delta,d,k)$ be an instance of \textsc{Editing to a Graph of Given Degrees}.
Let $U\subset V(G)$, $D\subseteq E(G-U)$ and $A\subseteq \binom{V(G)\setminus U}{2}$. If the vertex deletion, edge deletion or edge addition is not in $S$, then it is assumed that $U=\emptyset$, $D=\emptyset$ or $A=\emptyset$ respectively. We say that $(U,D,A)$ is a \emph{solution} for $(G,\delta,d,k)$, if $|U|+|D|+|A|\leq k$, and for the graph $G'=G-U-D+A$, $d_{G'}(v)=\delta(v)$ for $v\in V(G')$.  
We also say that $G'$ is obtained by editing with respect to $(U,D,A)$.

\section{\classFPT-algorithm for Editing to a Graph of Given Degrees}\label{sec:FPT}
Throughout this section we assume that $S=\{\text{vertex deletion},\text{edge deletion},\\ \text{edge addition}\}$, i.e., the all three editing operations are allowed, unless we explicitly specify the set of allowed operations. 
We prove the following theorem.

\begin{theorem}\label{thm:fpt}
{\sc Editing to a Graph of Given Degrees} can be solved in time $2^{O(kd^2+k\log k)}\cdot poly(n)$ for $n$-vertex graphs.
\end{theorem}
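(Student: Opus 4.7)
The plan is to combine structural preliminaries with the random separation technique of Cai, Chan, and Chan, followed by an exhaustive enumeration of the local structure of the solution. For a hypothetical solution $(U,D,A)$ of cost at most $k$, set $W=U\cup V(D)\cup V(A)$; since $|U|+2|D|+2|A|\le 2k$, one has $|W|\le 2k$. A vertex $v\notin W$ satisfies $d_{G'}(v)=d_G(v)-|N_G(v)\cap U|$, so every vertex with $d_G(v)\neq\delta(v)$ must lie in $W$; moreover, since a single editing operation changes the degree of a non-deleted vertex by at most one, every vertex with $d_G(v)>d+k$ must lie in $U$. These observations give immediate rejection rules and let us assume $|W|\le 2k$, that at most $k$ vertices have $G$-degree exceeding $d+k$, and that every non-deleted vertex has $G$-degree at most $d+k$.

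The core of the algorithm is random separation. I would first guess, in $2^{O(k)}$ ways, which subset of the very high-degree vertices belongs to $U$, reducing to an instance whose underlying graph has maximum degree at most $d+k$. Colour each vertex of this graph independently red or blue with a suitable probability; a colouring is \emph{good} for the solution if every vertex of the edge-affected set $S=V(D)\cup V(A)$ is red while every vertex of its $G$-neighbourhood outside $W$ is blue. Since every vertex of $S$ has $G$-degree at most $d+k$ and $|S|\le 4k$, the colour-constrained part has size polynomial in $d+k$, so a good colouring exists with acceptable probability and can be derandomised by a standard universal/splitter family. This step pays the $2^{O(k\log k)}$ factor in the running time and exposes $S$ as a union of small red components of $G$ isolated by blue.

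With $S$ identified, I would enumerate all possible edge-edits inside it. Because each vertex of $S$ must finish with $G'$-degree at most $d$, and any added or deleted edge has both endpoints in $S$, the number of admissible local configurations per vertex is bounded by $2^{O(d^2)}$, giving $2^{O(kd^2)}$ in total. For each configuration the residual task is to select $U$ of size at most $k-|D|-|A|$ so that every $v\notin W$ has exactly $d_G(v)-\delta(v)$ neighbours in $U$ and every $v\in S$ realises its prescribed residual deficit; this bipartite constrained-degree problem reduces to a $b$-matching or maximum-flow computation of polynomial size.

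The main obstacle is calibrating the random-separation step so that the probability of a good colouring, combined with the cost of derandomisation, yields the claimed bound. The key is to apply the colour separation only to the neighbourhood of the edge-affected set $S$ rather than to the neighbourhood of the whole of $W$, so that after the high-degree preprocessing the constrained region is bounded polynomially in $d+k$, and the $d$-dependence is pushed into the local-enumeration step, where it surfaces as $2^{O(d^2)}$ per affected vertex. Once these ingredients are in place, combining the guessing of the high-degree part of $U$, the derandomised random separation, the local enumeration, and the polynomial-time $b$-matching for the residual vertex-deletion problem delivers the claimed running time $2^{O(kd^2+k\log k)}\cdot\mathrm{poly}(n)$.
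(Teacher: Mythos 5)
There are several genuine gaps. First, the assertion that every vertex with $d_G(v)\neq\delta(v)$ must lie in $W=U\cup V(D)\cup V(A)$ is false: for $v\notin W$ the condition is $|N_G(v)\cap U|=d_G(v)-\delta(v)$, which is impossible when $d_G(v)<\delta(v)$ but perfectly attainable when $d_G(v)>\delta(v)$. Only the deficit side forces membership in $W$. The paper first branches on every vertex with $d_G(v)>\delta(v)$ (deleting $v$, a neighbour of $v$, or an incident edge), after which one may assume $d_G(v)\le\delta(v)$ for all $v$ and hence $N_G(U)\setminus U\subseteq W$; omitting this step undermines the rest of your plan. It also affects the running-time analysis: you keep maximum degree $d+k$, so the colour-constrained set around $S$ has size $\Theta(k(d+k))$ and the derandomisation costs $2^{O(kd+k^2)}$, which is not $2^{O(kd^2+k\log k)}$ when $d$ is small relative to $k$. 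After the paper's branching the maximum degree is $d$, which is what produces the $O(kd^2)$ bound.

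Second, and more seriously, the residual step of locating $U$ once $S=V(D)\cup V(A)$ and the edits inside $S$ are fixed does not reduce to $b$-matching or flow. The remaining task is: find $U$ disjoint from $S$ with $|U|\le k-|D|-|A|$ such that every $v\notin U\cup S$ has exactly $d_G(v)-\delta(v)$ neighbours in $U$. Taking $S=\emptyset$ and $d_G(v)-\delta(v)\equiv 1$ this is precisely the Efficient Dominating Set (perfect code) problem, which is \classNP-complete, so no polynomial-time flow formulation exists. This is exactly why colouring $S$ red is the wrong target: the difficulty is locating $U$, not $S$. The paper instead colours so that $U$ is red while $N_G^2[Z]\cup N_G^3[U]$ minus $U$ is blue, forcing $U$ to be a union of red ``equivalence classes''; the choice of which classes to delete and which edges to add is then made by a dynamic program whose states are integer partitions of the remaining degree deficits, supported by two structural lemmas (alternating-trail covering and replacement by matched edges) that let edge deletions be confined to a preselected matching. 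None of this machinery appears in your proposal, and it is needed precisely because the naive ``find $U$ by flow'' step fails.
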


\subsection{Preliminaries}
We need the following corollary of the results by Mathieson and Szeider in~\cite{MathiesonS12}.

\begin{lemma}\label{lem:brute}
{\sc Editing to a Graph of Given Degrees} can be solved in time $O^*(2^n)$ for $n$-vertex graphs.
\end{lemma}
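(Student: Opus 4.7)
The plan is a straightforward brute-force over the vertex-deletion part of the solution, reducing the remainder to an edge-only editing subproblem that is already known to be polynomial. Concretely, I would iterate over all $2^n$ subsets $U \subseteq V(G)$ (discarding those with $|U| > k$), and for each candidate $U$ reduce to the problem of editing the edges of $G - U$ so that every remaining vertex $v$ has degree exactly $\delta(v)$, using as few edge deletions plus edge additions as possible. We output YES iff some $U$ admits an edge-editing cost of at most $k - |U|$.

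The key ingredient is that, for a fixed $U$, the residual problem is exactly \textsc{Editing to a Graph of Given Degrees} on $G-U$ restricted to $S \subseteq \{\text{edge deletion}, \text{edge addition}\}$, which Mathieson and Szeider~\cite{MathiesonS12} showed (and as recalled in the introduction) is solvable in polynomial time. This reduction can be cast as a degree-constrained subgraph (i.e., $b$-matching / $f$-factor) computation: find a spanning subgraph $H$ of the complete graph on $V(G)\setminus U$ with $d_H(v)=\delta(v)$ for all $v$ that minimizes the symmetric difference $|E(G-U)\,\triangle\, E(H)|$; the minimum such symmetric difference is exactly the minimum number of edge editing operations needed, since each edge in $E(G-U)\setminus E(H)$ must be deleted and each edge in $E(H)\setminus E(G-U)$ must be added.

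Running the polynomial-time edge-editing subroutine for each of the at most $2^n$ choices of $U$ yields overall running time $O^*(2^n)$, as claimed. I would also note the obvious pruning that one only needs to consider $U$ with $|U|\le k$, but this does not improve the stated bound and is not needed for the lemma.

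The only subtlety, and the one step I would be most careful about, is handling the case where certain operations are forbidden by the input: if vertex deletion is not in $S$, we only try $U=\emptyset$; and when invoking the edge-editing subroutine we must pass to it the exact subset $S \cap \{\text{edge deletion},\text{edge addition}\}$, so that its polynomial-time guarantee still applies. With this caveat the lemma follows directly from the enumeration plus the Mathieson--Szeider edge-only result.
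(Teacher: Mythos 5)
Your proof is correct and takes essentially the same approach as the paper: brute-force enumeration of the vertex-deletion set $U$ over all $2^n$ subsets, followed by the polynomial-time Mathieson--Szeider algorithm for the edge-only residual problem on $G-U$. The additional remarks about the $f$-factor formulation and restricting to the allowed operation set $S$ are sensible elaborations but do not change the argument.
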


\begin{proof}
 Mathieson and Szeider in~\cite{MathiesonS12} proved that {\sc Editing to a Graph of Given Degrees} can be solved in polynomial time if only edge deletion and edge additions are allowed. Since the set of deleted vertices of a hypothetical solution $(U,D,A)$ of {\sc Editing to a Graph of Given Degrees} can be guessed by brute force and we have at most $2^n$ possibilities to choose this set, we can reduce
{\sc Editing to a Graph of Given Degrees} to the case when only  edge deletion and edge additions are allowed by choosing and deleting  $U$, and then solve the problem in polynomial time.
\end{proof}

We also need some structural results about solutions of {\sc Editing to a Graph of Given Degrees}  when  only edge deletion and edge additions are used.

We say that a solution $(U,D,A)$ of \textsc{Editing to a Graph of Given Degrees} is \emph{minimal} if there is no solution $(U',D',A')\neq(U,D,A)$ such that $U'\subseteq U$, $D'\subseteq D$ and $A'\subseteq A$. 

Let $(G,\delta,d,k)$ be an instance of \textsc{Editing to a Graph of Given Degrees}  such that for every $v\in V(G)$, $d_G(v)\leq \delta(v)$.
Let also $(U,D,A)$ be a solution for $(G,\delta,d,k)$ such that $U=\emptyset$, and let 
$G'=G-D+A$. We say that a trail $P=v_0,e_1,v_1,e_2,\ldots,e_s,v_s$  in $G'$ is \emph{$(D,A)$-alternating} if $e_1,\ldots,e_s\subseteq D\cup A$, and
for any $i\in\{2,\ldots,s\}$, either $e_{i-1}\in D,e_i\in A$ or $e_{i-1}\in A,e_i\in D$.
We also say that $P$ is a \emph{degree increasing} trail if  $e_1,e_s\in A$. 
Let $H(D,A)$ be the graph with the edge set $D\cup A$, and 
the vertex set of $H$ consists of the vertices of $G$ incident to the edges of $D\cup A$.

\begin{lemma}\label{lem:alt}
Let $(G,\delta,d,k)$ be an instance of \textsc{Editing to a Graph of Given Degrees}  such that for every $v\in V(G)$, $d_G(v)\leq \delta(v)$, and let $Z=\{v\in V(G)|d_G(v)\neq \delta(v)\}$.
For any minimal solution $(U,D,A)$ for $(G,\delta,d,k)$ such that $U=\emptyset$, the graph $H(D,A)$ can be covered by a family of edge-disjoint degree increasing $(D,A)$-alternating 
trails $\mathcal{T}$ (i.e., each edge of $D\cup A$ is in the unique trail of $\mathcal{T}$) with their end-vertices in $Z$. 
\end{lemma}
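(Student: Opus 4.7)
The plan is to build the trail family via local edge-end pairings at each vertex, and then invoke the minimality of $(U,D,A)$ to rule out any leftover closed structures.

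\textbf{Setting up pairings.} Write $a(v)$ and $d(v)$ for the number of edges of $A$ and $D$ incident to $v$. Since $U=\emptyset$, validity of the solution forces $a(v)-d(v)=\delta(v)-d_G(v)$, and combined with the hypothesis $d_G(v)\le\delta(v)$ this yields $a(v)\ge d(v)$, with equality exactly when $v\notin Z$. At each vertex $v$ I would fix (arbitrarily) a perfect matching between the $d(v)$ edge-ends of $D$-edges at $v$ and some $d(v)$ of the $a(v)$ edge-ends of $A$-edges at $v$, leaving $a(v)-d(v)$ $A$-edge-ends \emph{free}. This is possible precisely because $a(v)\ge d(v)$.

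\textbf{Tracing trails.} Starting from any free edge-end at a vertex $v_0$ (which necessarily lies in $Z$, since free ends occur only where $a(v)>d(v)$), traverse the corresponding $A$-edge; at each new vertex, if the arriving edge-end is matched with some other edge-end, continue along that edge, otherwise stop. The walk built this way alternates between $A$- and $D$-edges by construction, uses each edge of $H(D,A)$ at most once (each edge-end belongs to at most one pair), cannot return to its starting edge-end (that end is free, hence in no pair), and must therefore terminate at another free edge-end, i.e., at another vertex of $Z$ via an $A$-edge. Repeating at every yet-unused free edge-end produces a family $\mathcal{T}$ of pairwise edge-disjoint, degree-increasing $(D,A)$-alternating trails with end-vertices in $Z$.

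\textbf{Closing via minimality.} The main obstacle is to show $\mathcal{T}$ already covers every edge of $H(D,A)$. Any edge not in $\mathcal{T}$ would have both of its edge-ends matched by the pairing, so following the matchings from that edge in either direction closes up into a closed alternating trail $T$ all of whose consecutive edges are matched at their common endpoint. At each vertex $v$ on $T$, the matching pairs up $T$'s $A$-edges with its $D$-edges at $v$, so $T$ uses the same number of $A$-edges as $D$-edges incident to $v$. Consequently $(\emptyset, D\setminus E(T), A\setminus E(T))$ is still a valid solution: un-deleting $E(T)\cap D$ and dropping $E(T)\cap A$ produces a net degree change of $0$ at every vertex, contradicting the minimality of $(U,D,A)$. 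Hence no such $T$ exists, and $\mathcal{T}$ covers all of $D\cup A$, proving the lemma.
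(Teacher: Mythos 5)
Your proof is correct. You take a related but noticeably cleaner route than the paper. The paper first \emph{asserts} that the local balance of $D$- and $A$-edge-ends at non-$Z$ vertices implies the existence of some edge-disjoint cover of $H(D,A)$ by $(D,A)$-alternating trails, then chooses such a cover $\mathcal{T}$ with the minimum number of trails, and finally uses two separate minimality arguments (minimality of the solution and minimality of $|\mathcal{T}|$) to eliminate trails disjoint from $Z$ and to force the first and last edges of each trail to lie in $A$. You instead make the decomposition explicit via a transition system: by injectively pairing the $d(v)$ $D$-edge-ends at each $v$ with $A$-edge-ends (possible because $a(v)\geq d(v)$, with the surplus free ends occurring exactly at $Z$), the auxiliary graph on edge-ends has degree $1$ at free $A$-ends and $2$ elsewhere, so its components are paths and cycles; the paths yield exactly the desired degree-increasing alternating trails with both ends in $Z$, and the cycles are killed by a single appeal to the minimality of $(U,D,A)$. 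This buys two things: it makes rigorous the ``it implies'' step that the paper leaves informal, and it eliminates both the secondary optimization over $\mathcal{T}$ and the case analysis at vertices of $Z$, since the construction automatically produces trails whose terminal edges are in $A$. The core idea --- local $D$/$A$ balance plus minimality of the solution to exclude closed alternating trails --- is the same, but the mechanics are different and, in your version, more self-contained.
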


\begin{proof}
Observe that because for $v\in V(G)\setminus Z$, $d_G(v)=\delta(v)$, 
we have that $|\{e\in D|e\text{ is incident to }v\}|=|\{e\in A|e\text{ is incident to }v\}|$ 
for each $v\in V(H(D,A))\setminus Z$. It implies that $H(D,A)$ can be covered by a family of of edge-disjoint $(D,A)$-alternating 
trails $\mathcal{T}$ such that for every vertex of $v\in V(H(D,A))\setminus Z$, each trails enters $v$ exactly the same number times as it leaves $v$. 
Assume that $\mathcal{T}$ is chosen in such  a way that the number of trails is minimum.
If $\mathcal{T}$ contains a trail $P$ such that $V(P)\cap Z=\emptyset$, then $P$ has even length, and if we delete the edges of $P$ from $D$ and $A$ respectively, we obtain another solution for $(G,\delta,d,k)$, but this contradicts the minimality of $(U,D,A)$. Hence, we can assume that each trail has its end-vertices in $Z$. Suppose that for $v\in Z$, there is a trail $P\in\mathcal{T}$ such that the first or last edge $e$ of $P$ is incident to $v$ and $e\in D$. Because  $d_G(v)< \delta(v)$, there is another trail $P'\in\mathcal{T}$ such that $P'$ starts or ends in $v$, and respectively the first or lase edge $e'$ of $P$ is in $A$. If $P=P\rq{}$, then again we have that 
$P$ has even length, and the deletion of the edges of $P$ from $D$ and $A$ gives another solution for $(G,\delta,d,k)$ contradicting the minimality of $(U,D,A)$. We have that $P\neq P\rq{}$, but 
then we replace $P$ and $P'$ in $\mathcal{T}$ by their concatenation via $v$ and cover 
$H(D,A)$ by $|\mathcal{T}|-1$ paths contradicting the minimality of $\mathcal{T}$. Therefore, 
for each $P\in\mathcal{T}$, the first and last edges of $P$ are in $A$, i.e., $P$ is a degree increasing $(D,A)$-alternating 
trail.
\end{proof}

Using this lemma we obtain the following structural result.

\begin{lemma}\label{lem:struct}
Let $(G,\delta,d,k)$ be an instance of \textsc{Editing to a Graph of Given Degrees}  such that for every $v\in V(G)$, $d_G(v)\leq \delta(v)$, and let $Z=\{v\in V(G)|d_G(v)\neq \delta(v)\}$.
Suppose that $G$ has $r=\lfloor\frac{k}{3}\rfloor$ distinct edges $x_1y_1,\ldots,x_ry_r$ that form a matching such that all  $x_1,\ldots,x_r$ and $y_1,\ldots,y_r$ are distinct  from the vertices of $Z$ and not adjacent to the vertices of $Z$. 
If there is a solution for $(G,\delta,d,k)$ with the empty set of deleted vertices, then the instance has a solution $(U,D,A)$ such that 
\begin{itemize}
\item[i)] $U=\emptyset$,
\item[ii)] either $D=\emptyset$ or $D=\{x_1y_1,\ldots,x_hy_h\}$ for some $h\in\{1,\ldots, r\}$,
\item[iii)] for every $uv\in A$, either $u,v\in Z$ or $uv$ joins $Z$ with some vertex of $\{x_1,\ldots,x_h\}\cup\{y_1,\ldots,y_h\}$,
\item[iv)] for every $i\in\{1,\ldots,h\}$, $A$ has the unique edges $ux_i,vy_i$ such that $u,v\in Z$. 
\end{itemize} 
\end{lemma}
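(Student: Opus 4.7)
The plan is to start from an arbitrary minimal solution with $U=\emptyset$, apply Lemma~\ref{lem:alt} to decompose its edit graph into alternating trails, and then reroute every ``long'' trail through a single matching edge $x_iy_i$. Fix a solution $(\emptyset,D,A)$ which is inclusion-minimal (any solution can be shrunk to one, without raising the cost), and let $\mathcal{T}$ be the family of edge-disjoint degree-increasing $(D,A)$-alternating trails with endpoints in $Z$ covering $D\cup A$ supplied by Lemma~\ref{lem:alt}. Call a trail \emph{trivial} if it is a single $A$-edge between two vertices of $Z$ and \emph{long} otherwise. A long trail has odd length at least $3$, and so contributes at least three edits; therefore the number $t$ of long trails satisfies $3t\le |D|+|A|\le k$, giving $t\le\lfloor k/3\rfloor=r$.

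Enumerate the long trails as $P_1,\ldots,P_t$ with endpoints $z_i,z_i'\in Z$ (possibly $z_i=z_i'$), and, after relabelling the matching, form the triple $(U',D',A')$ with $U'=\emptyset$, $D'=\{x_1y_1,\ldots,x_ty_t\}$, and
\[
A'=\{z_ix_i,\,y_iz_i':i=1,\ldots,t\}\;\cup\;\{\,e\in A : e\text{ lies in a trivial trail of }\mathcal{T}\}.
\]
Set $h=t$; if $t=0$ then $D=\emptyset$ already, so $D'=\emptyset$ and (ii)--(iv) hold vacuously. Otherwise each of $x_1,\ldots,x_h,y_1,\ldots,y_h$ is incident to exactly one edge of $A'$ and that edge goes to $Z$, so (iii) and (iv) are immediate from the construction.

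It remains to verify that $(U',D',A')$ is really a solution of cost at most $k$. The new $A$-edges are pairwise distinct non-edges of $G$ because the $x_i,y_i$ are distinct from and non-adjacent to $Z$, while the trivial-trail edges live entirely inside $Z$; the new $D$-edges lie in $E(G)$ by the matching hypothesis. For the degree constraint, every vertex outside $Z\cup\{x_i,y_i:i\le h\}$ is untouched; at each $x_i$ or $y_i$ with $i\le h$ the new $A$-edge cancels the deletion of $x_iy_i$; and for $z\in Z$ the number of new $A$-edges at $z$ equals the number of trails of $\mathcal{T}$ having $z$ as an endpoint, which by counting end-points in the original trail decomposition must be exactly $\delta(z)-d_G(z)$. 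The cost satisfies $|D'|+|A'|=3t+(\text{number of trivial trails})\le|D|+|A|\le k$, because each long trail originally cost at least three.

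The main obstacle I anticipate is precisely this bookkeeping at $Z$-vertices: one has to exploit that the trail decomposition of Lemma~\ref{lem:alt} already allocates the ``degree deficit'' $\delta(z)-d_G(z)$ at each $z\in Z$ as the number of trails having $z$ as an endpoint, so that replacing whole long trails by length-three surrogates with the same endpoints preserves every deficit exactly. Given this observation, both the cost bound and the structural properties (i)--(iv) fall out directly from the construction.
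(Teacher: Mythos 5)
Your proof is correct and follows essentially the same route as the paper: both start from a minimal solution with $U=\emptyset$, apply Lemma~\ref{lem:alt}, observe that any trail containing a $D$-edge has length at least three (so there are at most $r=\lfloor k/3\rfloor$ of them), and then reroute each such trail through a distinct matching edge $x_iy_i$ via the length-three surrogate $z_ix_iy_iz_i'$ while keeping the single-edge trails intact. Your version adds a slightly more explicit verification of the cost bound and of the degree bookkeeping at $Z$-vertices, but the decomposition, the counting argument, and the surrogate construction are the same.
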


\begin{proof}
Consider a minimal solution $(U,D,A)$ for $(G,\delta,d,k)$ such that $U=\emptyset$. By Lemma~\ref{lem:alt}, $H(D,A)$ can be covered by a family of edge-disjoint degree increasing $(D,A)$-alternating 
trails $\mathcal{T}$ with their end-vertices in $Z$. Let $P_1,\ldots,P_h$ be the trails that have at least one edge from $D$. Because each $P_i$ has at least three edges, $h\leq r$. For each $P_i$,
denote by $u_i,v_i\in Z$ its end-vertices. For $i\in \{1,\ldots,h\}$, we replace $P_i$ by $u_ix_iy_iv_i$. Notice that $u_ix_i,y_iv_i\notin E(G)$ and $x_iy_i\in E(G)$. Respectively, we replace the edges of $P_i$ in $A$ by  $u_ix_i,y_iv_i$, and the edges of $P_i$ in $D$ by $x_iy_i$. It remains to observe that this replacement gives us the solution that satisfies i)-iv).
\end{proof}

\subsection{The algorithm}
We construct an \classFPT-algorithm for \textsc{Editing to a Graph of Given Degrees} parameterized by $k+d$.
The algorithm is based  on the random separation techniques introduced by Cai, Chan and Chan~\cite{CaiCC06} (see also~\cite{AlonYZ95}).

Let $(G,\delta,d,k)$ be an instance of \textsc{Editing to a Graph of Given Degrees}, and let $n=|V(G)|$.

\medskip
\noindent
{\bf Preprocessing.} At this stage of the algorithm our main goal is to reduce the original instance of the problem to a bounded number of instances with the property that for any vertex $v$, the degree of $v$ is at most $\delta(v)$.

First , we make the following observation.

\begin{lemma}\label{lem:del}
Let $(U,D,A)$ be a solution for $(G,\delta,d,k)$.
If $d_G(v)>\delta(v)+k$ for $v\in V(G)$, then $v\in U$.
\end{lemma}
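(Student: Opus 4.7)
The plan is a short contradiction argument based on the fact that each editing operation changes the degree of any single vertex by at most one.

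First I would suppose, toward a contradiction, that $v \notin U$. Then $v$ is a vertex of the edited graph $G' = G - U - D + A$, and by the definition of a solution we have $d_{G'}(v) = \delta(v)$. I would then track how the degree of $v$ can change under the three kinds of operations: deleting a vertex $u \in U$ can reduce $d_G(v)$ by at most $1$ (and only if $uv \in E(G)$); deleting an edge in $D$ can reduce $d_G(v)$ by at most $1$ (and only if that edge is incident to $v$); adding an edge in $A$ can only \emph{increase} the degree of $v$. Consequently
\[
d_{G'}(v) \;\geq\; d_G(v) - |U| - |D| \;\geq\; d_G(v) - k.
\]

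Combining this with the hypothesis $d_G(v) > \delta(v) + k$ yields $d_{G'}(v) > \delta(v)$, contradicting $d_{G'}(v) = \delta(v)$. Hence $v \in U$, proving the lemma.

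There is essentially no obstacle here; the only thing to be careful about is to observe that edge additions cannot help bring the degree down, so the $k$ operations collectively can reduce $d_G(v)$ by at most $k$, which is exactly what the hypothesis is calibrated to exploit. This gives the promised reduction rule: any vertex with $d_G(v) > \delta(v) + k$ must be placed into $U$, paid for at a cost of $1$ toward the budget $k$.
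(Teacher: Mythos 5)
Your proof is correct and follows essentially the same counting argument as the paper: since each of the at most $k$ vertex/edge deletions can decrease $d(v)$ by at most one and edge additions cannot decrease it, the degree of $v$ cannot drop below $d_G(v)-k>\delta(v)$, so $v$ must be deleted.
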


\begin{proof}
Suppose that $v\notin U$. Then to obtain a graph $G'$ with $d_{G'}(v)=\delta$, for at least $k+1$ neighbors $u$ of $v$, we should either delete $u$ or delete $uv$. Because the number of editing operations is at most $k$, we immediately obtain a contradiction that proves the lemma.  
\end{proof}

By Lemmas~\ref{lem:del}, we apply the following rule.

\medskip
\noindent
{\bf Vertex deletion rule.} 
If $G$ has a vertex $v$ with $d_G(v)> \delta(v)+k$, then delete $v$ and set $k=k-1$. If $k<0$, then stop and return a NO-answer.

\medskip
We exhaustively apply the rule until we either stop and return a NO-answer or obtain an instance of the problem such that the degree of any vertex $v$ is at most $\delta(v)+k$. 
In the last case it is sufficient to solve the problem for the obtained instance, and if it has a solution $(U,D,A)$, then the solution for the initial instance can be obtained by adding the deleted vertices to $U$.  From now we assume that we do not stop while applying the rule, and to simplify notations, assume that $(G,\delta,d,k)$ is the obtained instance.
Notice that for any $v\in V(G)$, $d_G(v)\leq \delta(v)+k\leq d+k$. Suppose that $v\in V(G)$ and $d_G(v)>\delta(v)$. Then if the considered instance has a solution, either $v$ or at least one of its neighbors should be deleted or at least one of incident to $v$ edges have to be deleted. It implies that we can branch as follows.

\medskip
\noindent
{\bf Branching rule.} If $G$ has a vertex $v$ with $d_G(v)> \delta(v)$, then stop and return a NO-answer if $k=0$, otherwise branch as follows.
\begin{itemize}
\item For each $u\in N_G[v]$, solve the problem for $(G-u,\delta,d,k-1)$, and if there is a solution $(U,D,A)$, then stop and return $(U\cup\{u\},D,A)$.
\item For each $u\in N_G(v)$, solve the problem for $(G-uv,\delta,d,k-1)$, and if there is a solution $(U,D,A)$, then stop and return $(U,D\cup\{uv\},A)$.
\end{itemize}
If none of the instances have a solution,  then return a NO-answer.

\medskip
It is straightforward to observe that by the exhaustive application of 
the rule we either solve the problem or obtain at most $(2(k+d)+1)^k$ instances of the problem such that the original instance has a solution if and only if one of the new instances has a solution, and for each of the obtained instances, the degree of any vertex $v$ is upper bounded by $\delta(v)$.  Now it is sufficient to explain how to solve \textsc{Editing to a Graph of Given Degrees} for such instances. 

To simplify notations, from now we assume that for $(G,\delta,d,k)$, $d_G(v)\leq \delta(v)$ for $v\in V(G)$.
Let $Z=\{v\in V(G)|d_G(v)<\delta(v)\}$. 
Before we move to the next stage of the algorithm, we use the following lemmas.

\begin{lemma}\label{lem:stop}
If $|Z|>2k$, then the instance $(G,\delta,d,k)$ has no solution.
\end{lemma}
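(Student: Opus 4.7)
The plan is to derive a contradiction by double-counting how an alleged solution $(U,D,A)$ could possibly account for all vertices in $Z$. Under the standing assumption that $d_G(v) \leq \delta(v)$ for every $v$, a vertex $v \in Z$ satisfies $d_G(v) < \delta(v)$, so on the way to the target graph $G' = G - U - D + A$ the vertex $v$ must either be removed (i.e., $v \in U$) or its degree must strictly increase.

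Concretely, I would partition $Z$ as $Z = (Z \cap U) \cup (Z \setminus U)$ and bound each piece. The first piece is trivial: $|Z \cap U| \le |U|$. For the second piece, pick any $v \in Z \setminus U$. The only editing operations that change the degree of $v$ in $G'$ are edge additions incident to $v$ (which raise it), edge deletions incident to $v$ (which lower it), and deletions of neighbours of $v$ (which also lower it). Since $d_{G'}(v) = \delta(v) > d_G(v)$, the net change at $v$ is strictly positive, so at least one edge of $A$ is incident to $v$. Every edge in $A$ has exactly two endpoints, hence $|Z \setminus U| \le 2|A|$.

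Combining the two bounds gives
\[
|Z| \;\le\; |U| + 2|A| \;\le\; 2\bigl(|U| + |D| + |A|\bigr) \;\le\; 2k,
\]
contradicting the assumption $|Z| > 2k$. Therefore no solution exists, proving the lemma.

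The only subtle point — which is essentially the whole content of the argument — is the observation in the previous paragraph that every non-deleted vertex of $Z$ must be incident to at least one added edge; once this is in place, the counting is immediate. No random separation, no structural lemmas about alternating trails, and no case analysis are needed here.
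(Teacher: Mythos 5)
Your proof is correct and takes essentially the same approach as the paper: both observe that every vertex of $Z$ must either be deleted or receive an added edge, giving $|Z| \leq |U| + 2|A| \leq 2k$. You simply spell out the degree-counting argument for the non-deleted vertices of $Z$ in a bit more detail.
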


\begin{proof}
Suppose that $(G,\delta,d,k)$ has a solution $(U,D,A)$.
If $d_G(v)<\delta(v)$ for a vertex $v$, then either $v\in U$ or $vu\in A$ for some $u\in V(G)$. It follows that $|Z|\leq |U|+2|A|\leq 2k$.
\end{proof}

\begin{lemma}\label{lem:isol}
Let $n\geq 2$. If $v\in V(G)$ and $d_G(v)=\delta(v)=0$, then $(G-v,\delta,d,k)$ has a solution if and only if $(G,\delta,d,k)$ has a solution, and any solution for  $(G-v,\delta,d,k)$ is a solution for $(G,\delta,d,k)$.
\end{lemma}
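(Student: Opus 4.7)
The plan is to prove both implications of the biconditional, with the second, stronger claim (that every solution for $(G-v,\delta,d,k)$ is already a solution for $(G,\delta,d,k)$) handling the ``if'' direction for free.

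For the easier direction, suppose $(U,D,A)$ is a solution for $(G-v,\delta,d,k)$. Then by definition $U\subseteq V(G-v)$, $D\subseteq E((G-v)-U)$ and $A\subseteq\binom{V(G-v)\setminus U}{2}$, so none of $U$, $D$, $A$ involves $v$. Consider the graph $G'=G-U-D+A$. For every $u\in V(G')\setminus\{v\}$, the neighborhood of $u$ in $G'$ is identical to its neighborhood in $(G-v)-U-D+A$, because $v$ is isolated in $G$, so $d_{G'}(u)=\delta(u)$. For $v$ itself, $v$ remains isolated in $G'$, so $d_{G'}(v)=0=\delta(v)$. Since $|U|+|D|+|A|\leq k$, this shows $(U,D,A)$ is a solution for $(G,\delta,d,k)$.

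For the converse, take a solution $(U,D,A)$ for $(G,\delta,d,k)$ and produce a solution for $(G-v,\delta,d,k)$. Two cases: if $v\notin U$, then the constraint $d_{G'}(v)=\delta(v)=0$ together with $d_G(v)=0$ forces $A$ to contain no edge incident to $v$ (and $D$ trivially contains none, as $v$ has no incident edges in $G$). Hence $(U,D,A)$ already lives in $G-v$ and is a valid solution there. If instead $v\in U$, then again $D$ and $A$ are unaffected by $v$ (since edges in $A$ go between vertices of $V(G)\setminus U$, which excludes $v$, and similarly for $D$). Set $U'=U\setminus\{v\}$; removing $v$ from $U$ does not change degrees of other vertices in the edited graph, and $v$ is still isolated there, so $d_{G-U'-D+A}(v)=0=\delta(v)$. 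Thus $(U',D,A)$ is a solution for $(G,\delta,d,k)$ that does not touch $v$, and we are back in the first case.

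There is no real obstacle here: the statement is essentially the observation that an isolated vertex whose target degree is $0$ is entirely decoupled from the rest of the instance. The only point worth being careful about is ruling out the possibility that an optimal solution ``wastes'' operations on $v$ (i.e.\ putting $v$ into $U$ or incident to an edge of $A$), which the case analysis above handles directly by exhibiting a same-or-smaller-cost solution that leaves $v$ alone.
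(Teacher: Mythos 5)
Your proof is correct and follows essentially the same approach as the paper: the easy direction is by direct inspection since $v$ is decoupled, and the converse normalizes the solution by arguing that $v$ need not be in $U$ and cannot be incident to edges of $D$ or $A$. The only cosmetic difference is that you make the two cases ($v\in U$ vs.\ $v\notin U$) explicit, while the paper dispatches the first with a one-line WLOG remark.
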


\begin{proof}
Let $d_G(v)=\delta(v)=0$. It is straightforward to see that if $(U,D,A)$ is a solution for $(G-v,\delta,d,k)$, then it is a solution for $(G,\delta,d,k)$. Suppose that $(U,D,A)$ is a solution for $(G,\delta,d,k)$. Because $v$ is isolated and $d_G(v)=\delta(v)$,  the vertex $v$ is not incident to any edge of $D$, and we can assume that $v\notin U$ as otherwise $(U\setminus\{v\}, D,A)$ is a solution for  $(G,\delta,d,k)$ as well.
Notice that no edge of $A$ is incident to $v$, because otherwise some edge of $D$ should be incident to $v$ since $d_G(v)=\delta(v)$. 
We conclude that $(U,D,A)$ is a solution for $(G-v,\delta,d,k)$.
\end{proof}

Using Lemma~\ref{lem:stop} and straightforward observations, we apply the following rule.

\medskip
\noindent
{\bf Stopping rule.} If $|Z|>2k$, then stop and return a NO-answer. If $Z=\emptyset$, then stop and return the trivial solution $(\emptyset,\emptyset,\emptyset)$. If $Z\neq\emptyset$ and $k=0$, then stop and return a NO-answer.

\medskip
Then we exhaustively apply the next rule.

\medskip
\noindent
{\bf Isolates removing rule.}
If $G$ has a vertex $v$ with $d_G(v)=\delta(v)=0$, then delete $v$. 

\medskip
Finally on this stage, we solve small instances.

\medskip
\noindent
{\bf Small instance rule.}
If $G$ has at most $3kd^2-1$ edges, then solve  \textsc{Editing to a Graph of Given Degrees}  using Lemma~\ref{lem:brute}; notice that after the exhaustive application of the previous rule, $G$ has at most $|Z|\leq 2k$ isolated vertices, i.e., $G$ has at most  $6kd^2-2+2k$ vertices.

\medskip
From now we assume that we do not stop at this stage of the algorithm and, as before, denote by $(G,\delta,d,k)$ the obtained instance and assume that $n=|V(G)|$.
We have that $G$ has at least $3kd^2$ edges,  $|Z|\leq 2k$, $Z\neq\emptyset$, $k\geq 1$,
and for any isolated vertex $v$, $\delta(v)\neq 0$, i.e., $v\in Z$.  Notice that since $Z\neq \emptyset$, $d\geq 1$.

\medskip
\noindent
{\bf Random separation.} Now we apply the random separation technique. We start with constructing a true-biased Monte-Carlo  algorithm and then explain how it can be derandomized. 

We color the vertices of $G$ independently and uniformly at random by two colors. In other words, we partition $V(G)$ into two sets $R$ and $B$. We say that the vertices of $R$ are \emph{red}, and the vertices of $B$ are \emph{blue}.

Let $P=v_0,\ldots,v_s$ be a walk in $G$. We say that $P$ is an \emph{$R$-connecting} walk if either $s\leq 1$ or 
for any $i\in\{0,\ldots,s-2\}$, $\{v_i,v_{i+1},v_{i+2}\}\cap R\neq\emptyset$, i.e., for any three consecutive  vertices of $P$, at least one of them is red.   
We also say that two vertices $x,y$ are \emph{$R$-equivalent} if there is an $R$-connecting walk that joins them. Clearly, $R$-equivalence is an equivalence relation on $R$. Therefore, it defines the corresponding partition of $R$ into equivalence classes. 
Denote by $R_0$ the set of red vertices that can be joined with some vertex of $Z$ by an $R$-connecting walk. Notice that $R_0$ is a union of some equivalence classes.
Denote by $R_1,\ldots,R_t$ the remaining classes, i.e., it is a partition of $R\setminus R_0$ such that any two vertices $x,y$ are in the same set if and only if $x$ and $y$ are $R$-connected; notice that it can happen that $t=0$.
Observe that for any distinct $i,j\in \{0,\ldots,t\}$, $N_G^3[R_i]\cap R_j=\emptyset$ because any two vertices of $R$ at distance at most 3 in $G$ are $R$-equivalent.  
For $i\in\{0,\ldots,t\}$, let $r_i=|R_i|$.

The partition $R_0,\ldots,R_t$ can be constructed in polynomial time. To construct $R_0$, we consider the set of red vertices at distance at most two from $Z$ and include them in $R_0$. Then we iteratively include in $R_0$ the red vertices at distance at most two from the vertices included in $R_0$ in the previous iteration. The sets $R_1,\ldots,R_k$ are constructed in a similar way. 

Our aim is to find a solution $(U,D,A)$ for $(G,\delta,d,k)$ such that
\begin{itemize}
\item $U\cap B=\emptyset$,
\item $R_0\subseteq U$,
\item for any $i\in\{1,\ldots,t\}$, either $R_i\subseteq U$ or $R_i\cap U=\emptyset$,
\item the edges of $D$ are not incident to the vertices of $N_G(U)$;
\end{itemize}
i.e., $U$ is a union of equivalence classes of $R$ that contains the vertices of $R_0$. 
We call  such a solution \emph{colorful}. 

Let $B_0=(Z\cap B)\cup N_G(R_0)$, and 
for $i\in\{1,\ldots,t\}$, let $B_i=N_G(R_i)$. Notice that each $B_i\subseteq B$, and for distinct $i,j\in\{0,\ldots,t\}$, the distance between any $u\in B_i$ and $v\in B_j$ is at least two, i.e., $u\neq v$ and $uv\notin E(G)$.   
For a vertex $v\in V(G)\setminus R$, denote by $def(v)=\delta(v)-d_{G-R}(v)$. Recall that $d_G(v)\leq\delta(v)$. Therefore, $def(v)\geq 0$. Notice also that $def(v)$ could be positive only for vertices of the sets $B_0,\ldots,B_t$.  
For each $i\in\{0,\ldots,t\}$, if $v\in B_i$, then either $v\in Z$ or $v$ is adjacent to a vertex of $R_i$. Hence,  $def(v)>0$ for the vertices of $B_0,\ldots,B_t$.
For a set $A\subseteq\binom{V(G)}{2}\setminus E(G)$, denote by $d_{G,A}(v)$ the number of elements of $A$ incident to $v$ for $v\in V(G)$.

We construct a dynamic programming algorithm that consecutively for $i=0,\ldots,t$, constructs the table $T_i$
that is either empty, or contains the unique zero element, or contains
lists of all the sequences $(d_1,\ldots,d_p)$ of positive integers, $d_1\leq\ldots\leq d_p$, such that 
\begin{itemize}
\item[i)] there is a set $U\subseteq R_0\cup\ldots\cup R_i$, $R_0\subseteq U$, and 
 for any $j\in\{1,\ldots,i\}$, either $R_j\subseteq U$ or $R_j\cap U=\emptyset$,
\item[ii)] there is a set  $A\subseteq \binom{V(G)}{2}\setminus E(G)$ of pairs of vertices of $B_0\cup\ldots\cup B_i$,  
\item[iii)] $d_1+\ldots+d_p+|U|+|A|\leq k$,
\end{itemize}
and the graph $G'=G-U+A$ has the following properties:
\begin{itemize}
\item[iv)] $d_{G'}(v)\leq \delta(v)$ for $v\in V(G')$, and $d_{G'}(v)< \delta(v)$ for exactly $p$ vertices $v=v_1,\ldots,v_p$,
\item[v)]  $\delta(v_j)-d_{G'}(v_j)=d_i$ for $j\in\{1,\ldots,p\}$.
\end{itemize}
For each sequence $(d_1,\ldots,d_p)$, the algorithm also keeps the sets $U,A$ for which i)--v) are fulfilled
and $|U|+|A|$ is minimum.
The table contains the unique zero element if 
\begin{itemize}
\item[vi)] there is a set $U\subseteq R_0\cup\ldots\cup R_i$, $R_0\subseteq U_i$, and 
 for any $j\in\{1,\ldots,i\}$, either $R_j\subseteq U$ or $R_j\cap U=\emptyset$,
\item[vii)] there is a set  $A\subseteq \binom{V(G)}{2}\setminus E(G)$ of pairs of vertices of $B_0\cup\ldots\cup B_i$,  
\item[viii)]  $|U|+|A|\leq k$, and
\item[ix)] for the graph $G'=G-U+A$, $d_{G'}(v)=\delta(v)$ for $v\in V(G')$.
\end{itemize}
For the zero element, the table stores the corresponding sets $U$ and $A$ for which vi)--ix) are fulfilled.

Now we explain how we construct the tables for $i\in\{0,\ldots,t\}$. 

\medskip
\noindent
{\bf Construction of $T_0$.} Initially we set $T_0=\emptyset$. 
If $\sum_{v\in B_0}def(v)>2(k-|R_0|)$, then we stop, i.e., $T_0=\emptyset$.  
Otherwise, we consider the auxiliary graph $H_0=G[B_0]$. For all sets $A\subseteq \binom{V(H_0)}{2}\setminus E(H_0)$ such that 
for any $v\in V(H_0)$, $d_{H_0,A}(v)\leq def(v)$, and $\sum_{v\in B_0}def(v)-|A|+|R_0|\leq k$, we construct the collection of positive integers 
$Q=\{def(v)-d_{H_0,A}(v)| v\in B_0\text{ and }def(v)-d_{H_0,A}(v)>0\}$ (notice that some elements of $Q$ could be the same). If $Q\neq\emptyset$, then we arrange the elements of $Q$ in increasing order and put the obtained sequence $(d_1,\ldots,d_p)$ of positive integers together with $U=R_0$ and $A$ in $T_0$. If there is $A$ such that $Q=\emptyset$, then we put the zero element in $T_0$ together with $U=R_0$ and $A$, delete all other elements of $T_0$ and then stop, i.e., $T_0$ contains the unique zero element in this case.

\medskip
\noindent
{\bf Construction of $T_i$ for $i\geq 1$.} We assume that $T_{i-1}$ is already constructed. Initially we set $T_i=T_{i-1}$. If $T_i=\emptyset$ or $T_i$ contains the unique zero element, then we stop. Otherwise, we consecutively consider all sequences $(d_1,\ldots,d_p)$ from $T_{i-1}$ with the corresponding sets $U,A$.   
If $\sum_{v\in B_i}def(v)+\sum_{j=1}^pd_i>2(k-|R_i|-|U|-|A|)$, then we stop considering $(d_1,\ldots,d_p)$. 
Otherwise, let $G'=G-U+A$, and let $u_1,\dots,u_p$ be the vertices of $G'$ with $d_j=\delta(u_j)-d_{G'}(u_j)$ for $j\in\{1,\ldots,p\}$.
We consider an auxiliary graph $H_i$ obtained from $G[B_i]$ by the addition of $p$ pairwise adjacent vertices $u_1,\ldots,u_p$. We set $def(u_j)=d_j$ for $j\in\{1,\ldots,p\}$.
For all sets $A'\subseteq \binom{V(H_i)}{2}\setminus E(H_i)$ such that 
for any $v\in V(H_i)$, $d_{H_i,A'}(v)\leq def(v)$, and $\sum_{v\in V(H_i)}def(v)-|A'|+|R_i|+|A|+|U|\leq k$, we construct the collection of positive integers $Q=\{def(v)-d_{H_i,A'}(v)| v\in V(H_i)\text{ and }def(v)-d_{H_i,A'}(v)>0\}$. If $Q\neq\emptyset$, then we arrange the elements of $Q$ in increasing order and obtain the sequence $(d_1',\ldots,d_q')$ of positive integers together with $U''=U\cup R_i$ and $A''=A\cup A'$.
If $(d_1',\ldots,d_q')$ is not in $T_i$, then we add it in $T_i$ together with $U'',A''$. If  $(d_1',\ldots,d_q')$ is already  in $T_i$ together with some sets $U''',A'''$, we replace $U'''$ and $A'''$ by $U''$ and $A''$ respectively if $|U''|+|A''|<|U'''|+|A'''|$.
 If there is $A'$ such that $Q=\emptyset$, then we put the zero element in $T_i$ together with $U''=U\cup U_i$ and $A''=A\cup A'$, delete all other elements of $T_i$ and then stop, i.e., $T_i$ contains the unique zero element in this case. 

The properties of the algorithm are summarized in the following lemma. 

\begin{lemma}\label{lem:DP}
The algorithm constructs the tables $T_i$ with at most $2^{O(\sqrt{k})}$ records each in time $2^{O(k\log k)}\cdot poly(n)$ for $i\in\{0,\ldots,t\}$, and each $T_i$ has the following properties.
\begin{enumerate}
\item If $T_i=\emptyset$, then  $(G,\delta,d,k)$ has no colorful solution.
\item The table $T_i$ contains the zero element if and only if $(G,\delta,d,k)$ has a colorful solution $(U,D,A)$ with $R_j\cap U=\emptyset$ for $j\in\{i+1,\ldots,t\}$ and $D=\emptyset$.
Moreover, if  $T_i$ contains the zero element with sets $U,A$, then 
 $(U,\emptyset,A)$ is a colorful solution for $(G,\delta,d,k)$.
\item The table $T_i$ contains a sequence $(d_1,\ldots,d_p)$ of positive integers, $d_1\leq\ldots\leq d_p$, if and only if there are sets $U,A$ that satisfy the conditions i)--v) given above. 
Moreover, if   $T_i$ contains $(d_1,\ldots,d_p)$ together with $U,A$, then $U,A$ are sets with minimum value of $|U|+|A|$ that satisfy i)--v). 
\end{enumerate}
\end{lemma}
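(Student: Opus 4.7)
The plan is to establish the three numbered assertions, together with the size and time bounds, by induction on $i$. The key structural fact enabling the dynamic program is the distance-two separation between different blocks $(R_j, B_j)$: the sets $B_j$, $B_{j'}$ with $j \neq j'$ are pairwise disjoint and mutually non-adjacent in $G$, so each edge of $A$ in a colorful solution lies either inside some $B_j$ or between a $B_j$ and a ``virtual'' deficit vertex inherited from a previously processed block. The table $T_i$ keeps, for each achievable residual deficit signature $(d_1,\ldots,d_p)$ after processing $R_0,\ldots,R_i$, a single witness $(U,A)$ of minimum $|U|+|A|$ realising that signature.

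For the size bound, every nonzero record in $T_i$ is a non-decreasing sequence of positive integers summing to at most $k$, i.e., an integer partition of some $m \leq k$. The Hardy--Ramanujan asymptotic yields at most $2^{O(\sqrt{k})}$ such partitions, giving $|T_i| \leq 2^{O(\sqrt{k})}$. For the time bound, the inner loop of stage $i$ enumerates edge sets $A' \subseteq \binom{V(H_i)}{2} \setminus E(H_i)$ with $|A'| \leq k$; the vertices of $H_i$ that can be incident to such an edge are the positive-deficit vertices of $B_i$ together with the $p \leq k$ virtual vertices $u_1,\ldots,u_p$, and since $\sum_v def(v) \leq 2k$ these form a vertex set of size $O(k)$. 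Hence there are at most $\binom{O(k^2)}{k} = 2^{O(k\log k)}$ candidate sets $A'$; multiplying by $t+1 = O(k)$ tables and at most $2^{O(\sqrt{k})}$ source records per table gives the overall $2^{O(k\log k)} \cdot poly(n)$ runtime.

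The correctness of assertions 1--3 is proved by induction on $i$. The base case $i=0$ follows directly from the construction: any colorful solution must delete all of $R_0$, so its residual deficit on $B_0$ must be corrected by edges of $A$ inside $H_0 = G[B_0]$, and $T_0$ enumerates precisely these configurations. For the induction step, a colorful witness $(U,\emptyset,A)$ truncated to the first $i$ blocks either leaves $R_i$ intact (in which case one reads off an extension of a $T_{i-1}$ record by a suitable $A' \subseteq H_i$) or deletes all of $R_i$ (handled by the $R_i \subseteq U$ branch). Conversely, every record placed in $T_i$ comes from a legitimate partial construction. The zero element corresponds to the empty signature, i.e., to a partial construction that already meets every degree demand on $B_0 \cup \ldots \cup B_i$; by block independence it extends, without any further modifications, to a complete colorful solution with $D = \emptyset$ and $R_j \cap U = \emptyset$ for $j > i$, establishing the ``only if'' direction of assertion 2, and the ``if'' direction is immediate from the construction.

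The main obstacle will be the minimality clause in assertion 3: the algorithm stores only one pair $(U,A)$ per signature, and one must argue this is optimal for \emph{all} downstream extensions. The justification is an additive exchange argument grounded in the block independence: the cost $|U|+|A|$ decomposes into block-local contributions (with each cross-block edge charged to the later block in which it is added), so replacing any sub-witness by a cheaper one carrying the same signature can never hurt a later extension. The replacement rule preserving the pair with smaller $|U''|+|A''|$ maintains exactly this invariant, and the inductive step then yields optimality of every witness stored in $T_i$.
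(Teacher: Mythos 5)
Your proposal follows the same inductive structure as the paper's proof: partition-counting for the size bound, enumeration over $A' \subseteq \binom{V(H_i)}{2}\setminus E(H_i)$ over an $O(k)$-vertex graph for the time bound, the same case split on whether $R_i\subseteq U$ or $R_i\cap U=\emptyset$ for the correctness of assertions 2--3, and the same dynamic-programming exchange argument for the minimality clause in assertion 3. One small slip: you assert the number of blocks is $t+1 = O(k)$, but the $R_j$'s are equivalence classes of red vertices not reachable from $Z$ by $R$-connecting walks, and nothing bounds their count by $k$ (most of them are simply left undeleted); the paper bounds the number of tables only by $n$, which still yields the claimed $2^{O(k\log k)}\cdot poly(n)$ running time, so the conclusion is unaffected. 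Also, your base-case sentence that the residual deficit on $B_0$ ``must be corrected by edges of $A$ inside $H_0$'' should be read carefully: $T_0$ records the residual signature after correcting some of the deficit by intra-$B_0$ edges, with the remainder deferred to cross-block edges in later stages, which is what the enumeration in the construction actually captures. With these clarifications your argument matches the paper's.
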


\begin{proof}
The proof is inductive. 

First, we show 1)--3) for $i=0$.
Recall that for any colorful solution $(U,D,A)$, $R_0\subseteq U$. 
If $\sum_{v\in B_0}def(v)>2(k-|R_0|)$, then because the addition of any edge increases the degrees of its end-vertices by one, we immediately conclude that there is no colored solution in this case.
Suppose that $(U,D,A)$ is a colorful solution. 
Let $A'=\{uv\in A|u,v\in B_0\}$ and $A''=\{uv\in A|u\in B_0,v\notin B_0\}$. Then $k\geq |U|+|D|+|A|\geq |R_0|+|A'|+|A''|$. Also each edge of $A'$ increases the degrees of two its end-vertices in $B_0$ by one, and each edge  of $A''$ increases the degree of its single end-vertex in $B_0$. Hence,
$\sum_{v\in B_0}def(v)\leq 2|A'|+|A''|$ and $\sum_{v\in B_0}def(v)-|A'|+|R_0|\leq |A'|+|A''|+|R_0|\leq |A|+|U|\leq k$. Because for the colorful solution $(U,D,A)$, the edges of $D$ are not incident to the vertices of $R_0$, $d_{H_0,A'}(v)\leq def(v)$ for $v\in B_0$. Therefore, if a colorful solution exists, then $T_0\neq\emptyset$. The claims 2 and 3 follows directly from the description of the construction of $T_0$; it is sufficient to observe that we try all possibilities to select the set $A$.  
 
Suppose that $i\geq 1$ and assume that $T_{i-1}$ satisfies 1)--3). Notice that $T_i=\emptyset$ if and only if $T_{i-1}=\ldots=T_0=\emptyset$, and we already proved that if  $T_{0}=\emptyset$, then  $(G,\delta,d,k)$ has no colorful solution.

Now we prove the second claim.

Suppose that $T_{i}$ has the zero element. Then $T_{i-1}\neq\emptyset$. If $T_{i-1}$ has the zero element, then 
by the inductive assumption, $(G,\delta,d,k)$ has a colorful solution $(U,D,A)$ with $R_j\cap U=\emptyset$ for $j\in\{i,\ldots,t\}$ and $D=\emptyset$. Clearly, it is 
 a colorful solution  with $R_j\cap U=\emptyset$ for $j\in\{i+1,\ldots,t\}$ and $D=\emptyset$.
Moreover, if $U,A$ are the sets that are in $T_{i-1}$, then they are in $T_i$ and $(U,\emptyset,A)$ is a colorful solution. 
Suppose now that $T_{i-1}$ does not contain the zero element. Then
there is $(d_1,\ldots,d_p)$ from $T_{i-1}$ with the corresponding sets $U,A$ such that we obtain the zero element when considering this record.   Let $G'=G-U+A$, and let $u_1,\dots,u_p$ be the vertices of $G'$ with $d_j=\delta(u_j)-d_{G'}(u_j)$ for $j\in\{1,\ldots,p\}$.
Also we have a set $A'\subseteq \binom{V(H_i)}{2}\setminus E(H_i)$ such that 
for any $v\in V(H_i)$, $d_{H_i,A}(v)\leq def(v)$,  $\sum_{v\in V(H_i)}def(v)-|A'|+|R_i|+|A|+|U|\leq k$,
and  the collection  of positive integers $Q=\{def(v)-d_{H_i,A'}(v)| v\in V(H_i)\text{ and }def(v)-d_{H_i,A'}(v)>0\}=\emptyset$.
Notice that $u_1,\ldots,u_p$ are at distance at least two from the vertices of $B_i$. Hence, 
$A'\subseteq \binom{V(G')\setminus R_i}{2}$ and $A''=A\cup A'\subseteq\binom{V(G)\setminus(U\cup R_i)}{2}$. Let $G''=G-U''+A''$ where $U''=U\cup R_i$.
By the construction, $d_{G''}(v)\leq \delta(v)$ for $v\in V(G'')$.
Observe that because $Q$ is empty, $\sum_{v\in V(H_i)}def(v)= 2|A'|$. 
Then, $|U''|+|A''|=|A'|+|R_i|+|A|+|U|= \sum_{v\in V(H_i)}def(v)-|A'| +|R_i|+|A|+|U|\leq k$.
We conclude that $(U'',\emptyset,A'')$ is a solution for $(G,\delta,d,k)$. By the construction,
$R_j\cap U''=\emptyset$ for $j\in\{i+1,\ldots,t\}$.

Suppose that $(G,\delta,d,k)$ has a colorful solution $(U,\emptyset,A)$ with $R_j\cap U=\emptyset$ for $j\in\{i,\ldots,t\}$ and $D=\emptyset$. Then by the inductive assumption, $T_{i-1}$ has the zero element, and we have that $T_i$ contains the same element. Suppose now that $(G,\delta,d,k)$ has no such a solution, but it has  a colorful solution $(U'',\emptyset,A'')$ with $R_j\cap U''=\emptyset$ for $j\in\{i+1,\ldots,t\}$. Then $R_i\subseteq U''$.  Also we have that $R_0\subseteq U''\setminus R_i\subseteq R_0\cup\ldots\cup R_{i-1}$. Consider the partition $A_1,A_2,A_3$ (some sets can be empty) of $A''$ such that the edges of $A_1$ join vertices $B_i$, the edges of $A_2$ join $B_i$ with vertices of $B_0\cup\ldots\cup B_{i-1}$, and the edges of $A_3$ join verices of $B_0\cup\ldots\cup B_{i-1}$. Let $w_1,\ldots,w_p$ be the end-vertices of the edges of $A_2$ in $B_0\cup\ldots\cup B_{i-1}$, 
$d_j=d_{F,A_2}(w_j)$ for $j\in\{1,\ldots,p\}$ where $F=G-U''$, and assume that $d_1\leq\ldots\leq d_p$.
Consider $F'=G-(U''\setminus R_i)+A_3$. Then $\{w_1,\ldots,w_p\}=\{v\in V(F')|\delta(v)>d_{F'}(v)\}$ and  $d_j=\delta(w_j)-d_{F'}(w_j)$
for $j\in\{1,\ldots,p\}$. Therefore, $T_{i-1}$ contains the record with the sequence $(d_1,\ldots,d_p)$ and some sets $U,A$. 
Let $G'=G-U+A$ and let $u_1,\dots,u_p$ be the vertices of $G'$ with $d_j=\delta(u_j)-d_{G'}(u_j)$ for $j\in\{1,\ldots,p\}$.
Notice that $u_1,\ldots,u_p$ are at distance at least two from the vertices of $B_i$. We construct $A_2'$ by replacing each edge $vw_j$ by $vu_j$ for $j\in\{1,\ldots,p\}$.
Let $A'=A_1\cup A_2'$.
We have that  $A'\subseteq \binom{V(H_i)}{2}\setminus E(H_i)$,  
for any $v\in V(H_i)$, $d_{H_i,A'}(v)\leq def(v)$, and 
$\sum_{v\in V(H_i)}def(v)-2|A'|=0$.
Also because $U,A$ are chosen in such a way that $|U|+|A|$ has minimum size for $(d_1,\ldots,d_p)$,
$\sum_{v\in V(H_i)}def(v)-|A'|+|R_i|+|A|+|U|=\sum_{v\in V(H_i)}def(v)-2|A'|+|A_1|+|A_2|+|A|+|R_i|+|U|\leq |A_1|+|A_2|+|A_3|+|U''|\leq k$.
Then for $A'$, we construct $Q=\{def(v)-d_{H_i,A'}(v)| v\in V(H_i)\text{ and }def(v)-d_{H_i,A'}(v)>0\}$, and because $Q=\emptyset$,  we put the zero element in $T_0$ together with $U\cup U_i$ and $A\cup A'$, delete all other elements of $T_i$ and then stop, i.e., $T_i$ contains the unique zero element in this case. 

The third claim is proved by similar arguments.

 Suppose $T_i$ contains $(d_1',\ldots,d_q')$. 
If $(d_1',\ldots,d_q')$ is in $T_{i-1}$, then by the inductive assumption, there are sets $U'',A''$ that satisfy the conditions i)--v) given above. Suppose that $(d_1',\ldots,d_q')$ is not in $T_{i-1}$. Then there is a sequence $(d_1,\ldots,d_p)$ in $T_{i-1}$ with the corresponding sets $U,A$ such that we obtain the sequence $(d_1',\ldots,d_q')$ when considering this record.
Also for the graph $H_i$, we have $A'\subseteq \binom{V(H_i)}{2}\setminus E(H_i)$ such that we obtain $(d_1',\ldots,d_q')$ by orderning $Q=\{def(v)-d_{H_i,A'}(v)| v\in V(H_i)\text{ and }def(v)-d_{H_i,A'}(v)>0\}$.
Then it is straightforward to verify that $U''=U\cup R_i$  and $A''=A\cup A'$ satisfy i)--v). 

We have that if $(d_1',\ldots,d_q')$ is in $T_i$ together with $U'',A''$ then i)--v) are fulfilled. Assume that $|U''|+|A''|$ is not minimal, i.e., there are other sets $\hat{U},\hat{A}$ such that $|\hat{U}|+|\hat{A}|<|U''|+|A''|$ and i)--v) are fulfilled for these sets. If $\hat{U}\cap R_i=\emptyset$, then we have that 
$(d_1',\ldots,d_q')$ together with some $\hat{U}',\hat{A}'$ such that $|\hat{U}'|+|\hat{A}'|\leq |\hat{U}|+|\hat{A}|$, and 
$(d_1',\ldots,d_q')$ with $\hat{U}',\hat{A}'$ instead of $U'',A''$ should be in $T_{i-1}$, but the records of $T_{i-1}$ are included in $T_i$ in the beginning and we have a contradiction. Hence, $R_i\subseteq \hat{U}$.
Consider the partition $A_1,A_2,A_3$ (some sets can be empty) of $\hat{A}$ such that the edges of $A_1$ join vertices $B_i$, the edges of $A_2$ join $B_i$ with vertices of $B_0\cup\ldots\cup B_{i-1}$, and the edges of $A_3$ join verices of $B_0\cup\ldots\cup B_{i-1}$. Let $w_1,\ldots,w_p$ be the end-vertices of the edges of $A_2$ in $B_0\cup\ldots\cup B_{i-1}$, 
$d_j=d_{F,A_2}(w_j)$ for $j\in\{1,\ldots,p\}$ where $F=G-\hat{U}$ and assume that $d_1\leq\ldots\leq d_p$.
Consider $F'=G-(\hat{U}\setminus R_i)+A_3$. Then $\{w_1,\ldots,w_p\}=\{v\in V(F')|\delta(v)>d_{F'}(v)\}$ and  $d_j=\delta(w_j)-d_{F'}(w_j)$
for $j\in\{1,\ldots,p\}$. Therefore, $T_{i-1}$ contains the record with the sequence $(d_1,\ldots,d_p)$ and some sets $U,A$. 
Let $G'=G-U+A$, and let $u_1,\dots,u_p$ be the vertices of $G'$ with $d_j=\delta(u_j)-d_{G'}(u_j)$ for $j\in\{1,\ldots,p\}$.
Notice that $u_1,\ldots,u_p$ are at distance at least two from the vertices of $B_i$. We construct $A_2'$ by replacing each edge $vw_j$ by $vu_j$ for $j\in\{1,\ldots,p\}$.
Let $A'=A_1\cup A_2'$.
We have that  $A'\subseteq \binom{V(H_i)}{2}\setminus E(H_i)$,  
for any $v\in V(H_i)$, $d_{H_i,A'}(v)\leq def(v)$, and
$\sum_{v\in V(H_i)}def(v)-2|A'|\geq 0$.
Also because $U,A$ are chosen in such a way that $|U|+|A|$ has minimum size for $(d_1,\ldots,d_p)$,
$\sum_{v\in V(H_i)}def(v)-|A'|+|R_i|+|A|+|U|\leq\sum_{v\in V(H_i)}def(v)-|A_1|-|A_2|+|R_i|+|A_3|+|\hat{U}|\leq k$.
Then for $A'$, we construct $Q=\{def(v)-d_{H_i,A'}(v)| v\in V(H_i)\text{ and }def(v)-d_{H_i,A'}(v)>0\}$, and because $Q=\{d_1',\ldots,d_q'\}$,  we obtain 
$(d_1',\ldots,d_q')$ with $U\cup R_i$ and $A\cup A'$, but since $|U\cup R_i|+|A\cup A'|\leq |\hat{U}|+|\hat{A}|$, we should put these sets in $T_i$ instead of $U'',A''$; a contradiction. 

Suppose now that $T_i$ contains $(d_1',\ldots,d_q')$ with $U'',A''$ that satisfy i)--v). If $R_i\cap U''=\emptyset$, then $(d_1',\ldots,d_q')$ is in $T_{i-1}$ by the inductive assumption. Therefore, the sequence is in $T_i$ as well. Suppose that $R_i\subseteq U''$. 
Consider the partition $A_1,A_2,A_3$ (some sets can be empty) of $A''$ such that the edges of $A_1$ join vertices $B_i$, the edges of $A_2$ join $B_i$ with vertices of $B_0\cup\ldots\cup B_{i-1}$, and the edges of $A_3$ join verices of $B_0\cup\ldots\cup B_{i-1}$. Let $w_1,\ldots,w_p$ be the end-vertices of the edges of $A_2$ in $B_0\cup\ldots\cup B_{i-1}$, 
$d_j=d_{F,A_2}(w_j)$ for $j\in\{1,\ldots,p\}$ where $F=G-\hat{U}$, and assume that $d_1\leq\ldots\leq d_p$.
Consider $F'=G-(\hat{U}\setminus R_i)+A_3$. Then $\{w_1,\ldots,w_p\}=\{v\in V(F')|\delta(v)>d_{F'}(v)\}$ and  $d_j=\delta(w_j)-d_{F'}(w_j)$
for $j\in\{1,\ldots,p\}$. Therefore, $T_{i-1}$ contains the record with the sequence $(d_1,\ldots,d_p)$ and some sets $U,A$. 
Let $G'=G-U+A$, and let $u_1,\dots,u_p$ be the vertices of $G'$ with $d_j=\delta(u_j)-d_{G'}(u_j)$ for $j\in\{1,\ldots,p\}$.
Notice that $u_1,\ldots,u_p$ are at distance at least two from the vertices of $B_i$. We construct $A_2'$ by replacing each edge $vw_j$ by $vu_j$ for $j\in\{1,\ldots,p\}$.
It remains to observe that we include $(d_1',\ldots,d_q')$ in $T_i$ when we consider $(d_1,\ldots,d_p)$ from $T_{i-1}$ and the set $A'=A_1\cup A_2'\subseteq \binom{V(H_i)}{2}\setminus E(H_i)$.  

It remains to obtain the upper bound for the number of elements in each table and evaluate the running time.

For a positive integer $\ell$, a sequence of positive integers $(\ell_1,\ldots,\ell_s)$, $\ell_1\leq\ldots\leq \ell_s$, is a \emph{partition} of $\ell$ if $\ell=\ell_1+\ldots+\ell_s$. 
To obtain an upper bound for the number of partitions $\pi(\ell)$, we can use the asymptotic formula obtained by Hardy and Ramanujan in 1918 and independently by  Uspensky in 1920 (see, e.g., the book of Andrews~\cite{Andrews98}):
$$\pi(\ell)\sim \frac{1}{4\sqrt{3}\ell}e^{\pi\sqrt{2\ell/3}}.$$
Observe that because for each sequence $(d_1,\ldots,d_p)$ in a table, $d_1+\ldots+d_p\leq k$, the total number of sequences in each table is upper bounded by 
$k\pi(k)$. Hence, by the asymptotic formula of Hardy and Ramanujan, the number of records in each table is $2^{O(\sqrt{k})}$.   

Notice that to construct $T_i$, we consider the graphs $H_i$ that has at most $2k$ vertices. Hence, $H_i$ have at most $2k^2$ pairs of non-adjacent vertices. Among these pairs we choose at most $k$ pairs. Hence, for each $H_i$, we have at most $2^{O(k\log k)}$ possibilities. For $i\geq 1$, we construct $H_i$ for each element of $T_{i-1}$. Therefore, the each table is constructed in time $2^{O(k\log k)}2^{O(\sqrt{k})}\cdot poly(n)$. Because the number of tables is at most $n$, we have that the algorithm runs in time $2^{O(k\log k)}\cdot poly(n)$.  
\end{proof}

We use the final table $T_t$ to find a colorful solution for $(G,\delta,d,k)$ if it exists. 
\begin{itemize}
\item If $T_r$ contains the zero element with $U,A$, then $(U,\emptyset,A)$ is a colorful solution.
\item If $T_r$ contains a sequence $(d_1,\ldots,d_p)$ with $U,A$ such that $3(d_1+\ldots+ d_p)/2+|U|+|A|\leq k$ and $r=d_1+\ldots+d_p$ is even, then let $G'=G-U+A$ and 
find the vertices $u_1,\ldots,u_p$ of $G'$ such that $\delta(u_i)-d_{G'}(u_i)=d_i$ for $i\in\{1,\ldots,p\}$.
Then greedily 
 find a matching $D$ in  
$G'$
with $h=r/2$ edges $x_1y_1,\ldots,x_hy_h$ such that $x_1,\ldots,x_h$ and $y_1,\ldots,y_h$ are distinct  from the vertices of $\{u_1,\ldots,u_p\}\cup N_G(U)$ and not adjacent to $u_1,\ldots,u_p$. 
Then we construct the set $A'$ as follows. Initially $A'=\emptyset$. Then
for each $i\in\{1,\ldots,r\}$, we consecutively select next $d_i$ vertices $w_1,\ldots,w_{d_i}\in \{x_1,\ldots,x_h,y_1,\ldots,y_h\}$ in such a way that each vertex is selected exactly once and add in  $A'$ the pairs $u_1w_1,\ldots, u_iw_{d_i}$.
Then we output the solution $(U,D,A\cup A')$.
\item In all other cases we have a NO-answer.
\end{itemize}

\begin{lemma}\label{lem:colorful}
The described algorithm finds a colorful solution for $(G,\delta,d,k)$ if it exists, and it returns a NO-answer otherwise.
\end{lemma}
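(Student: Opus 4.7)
The plan is to establish both directions of the equivalence.

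First I would handle soundness: whenever the algorithm outputs a triple, it is a valid colorful solution. The zero-element branch is immediate from Lemma~\ref{lem:DP}(2). For the sequence branch, I would argue by direct verification that $(U,D,A\cup A')$ is colorful. Each deficient vertex $u_i$ receives exactly $d_i$ new incident edges from $A'$, so its degree becomes $\delta(u_i)$; each matching endpoint $x_j$ (resp.\ $y_j$) loses its matching edge (in $D$) and gains exactly one $A'$-edge, so its degree is preserved, which matches $\delta$ because the greedy selection forces these endpoints to lie outside $\{u_1,\ldots,u_p\}\cup N_G(U)$ and to have no neighbor in $\{u_1,\ldots,u_p\}$. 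The cardinality bound is $|U|+h+|A|+r=|U|+3r/2+|A|\le k$, which is precisely the algorithm's trigger condition.

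For completeness, I would split on whether some colorful solution has $D=\emptyset$: if so, Lemma~\ref{lem:DP}(2) places the zero element in $T_t$ and we are done. Otherwise, I would fix any colorful solution $(U^\ast,D^\ast,A^\ast)$ with $D^\ast\neq\emptyset$ and apply Lemma~\ref{lem:struct} to the reduced instance $(G-U^\ast,\delta,d,k-|U^\ast|)$, using $(\emptyset,D^\ast,A^\ast)$ to witness the hypothesis. This transforms the solution into the matching-plus-alternating-edge form $(U^\ast,D,A)$. Decomposing $A=A_{\mathrm{core}}\cup A_{\mathrm{match}}$ (edges with both endpoints in the deficit set versus edges incident to matching endpoints), the partial edit $(U^\ast,\emptyset,A_{\mathrm{core}})$ leaves a deficit sequence $(d_1,\ldots,d_p)$ with $\sum d_i=2h$. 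By Lemma~\ref{lem:DP}(3), this sequence appears in $T_t$ with sets $U,A$ achieving minimum $|U|+|A|\le |U^\ast|+|A_{\mathrm{core}}|$, and $3r/2+|U|+|A|\le 3h+|U^\ast|+|A_{\mathrm{core}}|=|U^\ast|+|D|+|A_{\mathrm{match}}|+|A_{\mathrm{core}}|\le k$, so the algorithm enters the sequence branch on this record.

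The main obstacle will be justifying that the greedy construction of the matching $D$ in $G'=G-U+A$ actually succeeds (and, symmetrically, that Lemma~\ref{lem:struct} applies to $G-U^\ast$ in the completeness argument). This rests on an edge-counting seeded by the Small instance rule: since $|E(G)|\ge 3kd^2$ and at most $kd$ edges are lost when removing $U$, the graph $G'$ still has many more edges than the bound $|B^\ast|\cdot d = O(kd^2)$ on the number of edges incident to the forbidden set $B^\ast:=\{u_1,\ldots,u_p\}\cup N_G(U)\cup N_{G'}(\{u_1,\ldots,u_p\})$, whose cardinality is $O(kd)$. Because the maximum degree in $G'$ is at most $d$, a greedy extraction then yields a matching of size $h\le \lfloor k/3\rfloor$ with the required avoidance properties, closing both the soundness and completeness arguments.
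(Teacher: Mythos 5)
Your proof follows essentially the same strategy as the paper's: verify soundness directly (zero element via Lemma~\ref{lem:DP}(2), sequence branch by checking the degree accounting and using the edge-count $|E(G)|\ge 3kd^2$ from the small-instance rule to guarantee the greedy matching succeeds), and prove completeness by passing to $G-U^\ast$, invoking Lemma~\ref{lem:struct} after an edge-counting argument to exhibit the required matching in $G-U^\ast$, splitting $A$ into core edges and matching-incident edges, and then using Lemma~\ref{lem:DP}(3) and minimality of the stored $(U,A)$ to show the algorithm's trigger condition $3(d_1+\cdots+d_p)/2+|U|+|A|\le k$ is met. The only cosmetic difference is that you split explicitly on whether some colorful solution already has $D=\emptyset$, whereas the paper handles both situations in a single pass through Lemma~\ref{lem:struct} (which may return $D'=\emptyset$, yielding an empty deficit sequence and hence the zero element); the arithmetic and the key invocations are the same.
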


\begin{proof}
If $T_r$ contains the zero element with $U,A$, then $(U,\emptyset,A)$ is a colorful solution by Lemma~\ref{lem:DP}.

Suppose $T_r$ contains a sequence $(d_1,\ldots,d_p)$ with $U,A$ such that $3(d_1+\ldots+ d_p)/2+|U|+|A|\leq k$ and $r=d_1+\ldots+d_p$ is even. 
Observe that  if $D$ and $A'$ exist, then $(U,D,A\cup A')$ is a solution.
The graph $G'=G-U+A$ has $p$ vertices $u_1,\ldots,u_p$  such that $\delta(u_i)-d_{G'}(u_i)=d_i$ for $i\in\{1,\ldots,p\}$, and for any other vertex $v$, $d_{G'}(v)=\delta(v)$.
Observe that  $p\leq k-|U|$. Also at most $|U|d^2$ edges are incident to the vertices of $N_G(U)$. 
 Recall that $G$ has  at least $3kd^2$ edges. Therefore, $G'$ has at least $3kd^2-pd^2-|U|d^2\geq 2kd^2$ edges that are not incident to $\{u_1,\ldots,u_p\}\cup N_G(U)$ and the vertices that are adjacent to $u_1,\ldots,u_p$. 
Then $h\leq k/3$ edges of $D$ can be selected greedily by the consecutive arbitrary choice of $x_iy_i$  and the deletion of at most $2d-1$ edges incident to $x_i,y_i$. Because $2h=r=d_1+\ldots+d_p$, we always can join $u_1,\ldots,u_p$ with $x_1,\ldots,x_h$, $y_1,\ldots,y_h$ by edges as prescribed.

Now we show that if  $(G,\delta,d,k)$ has a colorful solution $(U,D,A)$, then the algorithm outputs some colorful solution. 

Consider the graph $G'=G-U$ and let $k'=k-|U|$. 
Clearly, $(G',\delta,d,k')$ is an instance of \textsc{Editing to a Graph of Given Degrees}  such that for every $v\in V(G')$, $d_{G'}(v)\leq \delta(v)\leq d$, and it has 
 a solution  with the empty set of deleted vertices. 
Let $Z'=\{v\in V(G')|d_{G'}(v)< \delta(v)\}$. By Lemma~\ref{lem:stop}, $|Z'|\leq 2k'$. Then at most $2k'd^2$ edges of $G'$ are incident to the vertices of $Z'$
and  the vertices that are adjacent to them.
 Also at most $|U|d$ edges are incident to the vertices of $U$ in $G$.  Because $G$ has  at least $3kd^2$ edges,  $G'$ has at least $3kd^2-2(k-|U|)d^2-|U|d\geq kd^2$ edges that are not incident to the vertices of $Z'$ and the vertices that are adjacent to them. Then a matching with at least $\lfloor k/3\rfloor$ edges $x_1y_1,\ldots,x_sy_s$ with their end-vertices at distance at least two from $Z'$ can be selected greedily. By Lemma~\ref{lem:struct}, $(G',\delta,d,k')$ has a solution $(U,D',A')$ such that 
\begin{itemize}
\item[i)] $U=\emptyset$,
\item[ii)] either $D'=\emptyset$ or $D'=\{x_1y_1,\ldots,x_hy_h\}$ for some $h\in\{1,\ldots, s\}$, 
\item[iii)] for every $uv\in A'$, either $u,v\in Z'$ or $uv$ joins $Z'$ with some vertex of $\{x_1,\ldots,x_h\}\cup\{y_1,\ldots,y_h\}$,
\item[iv)] for every $i\in\{1,\ldots,h\}$, $A'$ has the unique edges $ux_i,vy_i$ such that $u,v\in Z$. 
\end{itemize} 
Let $A''=\{uv\in A'|u,v\in Z'\}$.  Consider $G''=G-U+A''$. Let $u_1,\ldots,u_p$ be the vertices of $G''$ such that $d_{G''}(u_i)<\delta(u_i)$ for $i\in\{1,\ldots,p\}$. Let $d_i=\delta(u_i)-d_{G''}(u_i)$ and assume that $d_1\leq\ldots\leq d_p$. 
Notice that $3(d_1+\ldots+ d_p)/2+|U|+|A''|\leq k$. 
We have that $T_t$ contains $(d_1,\ldots,d_p)$ with some sets $U''',A'''$ and
$|U'''|+|A'''|\leq|U|+|A''|$. Then $3(d_1+\ldots+ d_p)/2+|U'''|+|A'''|\leq k$ and the algorithm finds a colorful solution for the instance 
 $(G,\delta,d,k)$.
\end{proof}

The described algorithm finds a colorful solution if it exists. To find a solution, we run the randomized algorithm  $N$ times. If we find a solution after some run, we return it and stop. If we do not obtain a solution after $N$ runs, we return a NO-answer. The next lemma shows that it is sufficient to run the algorithm $N=2^{O(dk^2)}$ times.

\begin{lemma}\label{lem:N-bound}
If after $N=2^{4kd^2}$ executions the randomized algorithm does not find a solution for $(G,\delta,d,k)$, then it does not exists with a positive probability $p$ such that $p$ does not depend on the instance.
\end{lemma}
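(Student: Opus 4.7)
The plan is a standard random-separation argument. The goal is to show that a single trial succeeds in finding a colorful solution with probability at least $2^{-4kd^2}$; then by independence over $N=2^{4kd^2}$ trials the failure probability is at most $(1-2^{-4kd^2})^N\le e^{-1}$, and one may take $p:=1-e^{-1}$ (independent of the instance).

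Assume $(G,\delta,d,k)$ admits a solution $(U^*,D^*,A^*)$. First I apply Lemma~\ref{lem:struct} to the reduced instance $(G-U^*,\delta,d,k-|U^*|)$, which has the edge-only solution $(\emptyset,D^*,A^*)$; this produces an equivalent edge-only solution $(\emptyset,\tilde D,\tilde A)$ with the tidy matching/addition structure: $\tilde D\subseteq\{x_1y_1,\ldots,x_hy_h\}$ is part of a matching whose endpoints lie at graph-distance at least $2$ from $Z'=\{v\in V(G-U^*):d_{G-U^*}(v)<\delta(v)\}$, and each edge of $\tilde A$ either lies inside $Z'$ or joins $Z'$ to some $x_j,y_j$.

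Next, I define the \emph{good event} $E$ on the random colouring: every vertex of $U^*$ is red, while every vertex of $(N_G^3[U^*]\setminus U^*)\cup(Z\setminus U^*)$ is blue. Under $E$ one checks: (a)~$U^*\subseteq R$, so $U^*\cap B=\emptyset$; (b)~since every walk of length at most~$3$ with red endpoints is $R$-connecting, no vertex of $R\setminus U^*$ is $R$-equivalent to any vertex of $U^*$, so $U^*$ is a union of $R$-equivalence classes; (c)~the only red vertices of $Z$ lie in $U^*$, hence $R_0\subseteq U^*$. Consequently $U=U^*$ together with the admissible addition set equal to the $Z'$-internal part of $\tilde A$ is a colorful configuration, so by Lemma~\ref{lem:DP} the DP finds an admissible record and, by Lemma~\ref{lem:colorful}, the subsequent greedy matching step recovers a colorful solution of cost at most $k$.

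Finally, I bound $\Pr[E]$ from below. The preprocessing guarantees $d_G(v)\le d$ for every $v$, so $|U^*|\le k$, $|Z|\le 2k$ (Lemma~\ref{lem:stop}), and $|N_G^3[U^*]\setminus U^*|\le |U^*|(d+d^2+d^3)$. Exploiting that $\tilde D\cup\tilde A$ touches only $O(k)$ vertices and that the algorithm's DP tables only depend on the $2$-neighbourhood of this solution support, a careful accounting shows that the number of vertices whose colour is actually prescribed by $E$ is at most $4kd^2$, so $\Pr[E]\ge 2^{-4kd^2}$, and the claimed bound follows.

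\medskip
\noindent The main obstacle lies in the last counting step: the straightforward bound using $|N_G^3[U^*]|=O(kd^3)$ only delivers $\Pr[E]\ge 2^{-O(kd^3)}$, which is weaker than needed. To reach $2^{-4kd^2}$ one has to argue that only the $2$-neighbourhood of the bounded-size solution support really needs to be controlled by the colouring, any red vertices appearing via $3$-hop walks being absorbed by the DP's exhaustive enumeration of admissible addition sets $A$ rather than forbidden outright.
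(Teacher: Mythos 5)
Your high-level plan matches the paper's: define a ``good event'' on the random colouring whose occurrence guarantees a colorful solution, bound its probability below by $2^{-4kd^2}$, and then observe that $(1-2^{-4kd^2})^N\le e^{-1}$. But there is a genuine gap in the middle that you explicitly acknowledge, and you do not supply the idea that closes it.

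The missing ingredient is not ``controlling only the 2-neighbourhood of the solution support'' as you speculate at the end; the controlled set really is $N_G^2[Z]\cup N_G^3[U]$. The point is that $N_G^3[U]$ is much smaller than the naive bound $|U|(1+d+d^2+d^3)=O(kd^3)$ suggests, because $|N_G(U)|\le 2k$: every $v\in N_G(U)$ loses at least one neighbour when $U$ is removed, hence has a positive deficit in $G-U$, and Lemma~\ref{lem:stop} applied to $(G-U,\delta,d,k-|U|)$ bounds the number of deficit vertices by $2(k-|U|)\le 2k$. Consequently $N_G^3[U]= U\cup N_G^2[N_G(U)]$ is a 2-neighbourhood of a set of at most $2k$ vertices, which is $O(kd^2)$. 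Combined with $|N_G^2[Z]|\le 2k(d^2-d+1)$ (using $d_G(v)\le d-1$ on $Z$), this gives $|N_G^2[Z]\cup N_G^3[U]|\le 4kd^2$ and hence $\Pr[E]\ge 2^{-4kd^2}$. Without the $|N_G(U)|\le 2k$ observation the argument only yields $2^{-O(kd^3)}$, exactly the obstacle you flag.

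A secondary issue: your good event constrains only $Z\setminus U^*$ to be blue, not $N_G^2[Z]\setminus U^*$. That is too weak to force $R_0\subseteq U^*$: a red vertex $v\notin U^*$ at distance $2$ from $Z$ and outside $N_G^3[U^*]$ could satisfy the $R$-connecting condition (the walk $v,w,z$ with $z\in Z$ has $v$ red among each window of three), placing $v\in R_0$ and ruining colorfulness. The paper's event explicitly colours $N_G^2[Z]$ correctly to rule this out. Finally, the invocation of Lemma~\ref{lem:struct} inside this proof is unnecessary; that structural rewriting is part of Lemma~\ref{lem:colorful}'s correctness argument, whereas Lemma~\ref{lem:N-bound} only needs that a colorful solution exists under the good event, which Lemma~\ref{lem:colorful} then finds.
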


\begin{proof}
Suppose that $(G,\delta,d,k)$ has a solution $(U,D,A)$.  The algorithm colors the vertices of $G$ independently and uniformly at random by two colors. 

We find a lower bound for the probability that the vertices of $N_G^2[Z]\cup N_G^3[U]$ are colored correctly with respect to the solution, i.e., the vertices of $U$ are red and all other vertices are blue.
Recall that $d_G(v)\leq d$ for $v\in V(G)$ and $d_G(v)\leq d-1$ for $v\in Z$. Recall also that $|Z|\leq 2k$. 
Hence, $|N_G^2[Z]|\leq 2kd^2-4kd+2k\leq 2kd^2-2k$. The set $U$ has at most $k$ vertices. Because for each $v\in N_G(U)$, its degree in $G-U$ is at most $d_G(v)-1<\delta(v)$, by Lemma~\ref{lem:stop}, $|N_G(U)|\leq 2k$. Therefore, $N_G^3[U]\leq 2kd^2+2k$. We have that  $|N_G^2[Z]\cup N_G^3[U]|\leq 4kd^2$. Hence, we color the set correctly with the probability at least
$2^{-4kd^2}$.

Assume that the random coloring colored $N_G^2[Z]\cup N_G^3[U]$ correctly with respect to the solution $(U,D,A)$. 
Recall $R_0$ is the set of red vertices that can be joined with some vertex of $Z$ by an $R$-connecting walk. Because the vertices of $N_G^2[Z]$ and the vertices of $N_G^3[U]$ are colored correctly,  we have that $R_0$ contains only red vertices from $U$. Also for other sets $R_1,\ldots,R_t$ of the partition of $R$, we have that each $R_i\subseteq U$ or $R_i\cap U=\emptyset$.
It follows, that the problem has a colorful solution in this case, and the algorithm finds it.

The probability that  the vertices of $N_G^2[Z]\cup N_G^3[U]$ are not colored correctly with respect to $(U,D,A)$ is at most $(1-2^{-4kd^2})$, and the probability that these vertices are non colored correctly with respect to the solution for neither of $N=2^{4kd^2}$ random colorings is at most   $(1-2^{-4kd^2})^{4kd^2}$, and the claim follows. 
\end{proof}

The algorithm can be derandomized by standard techniques (see~ \cite{AlonYZ95,CaiCC06}) because random colorings can be replaced by the colorings induced by \emph{universal sets}.
Let $n$ and $r$ be positive integers, $r\leq n$. An  \emph{$(n,r)$-universal set} is a collection of binary vectors of length $n$ such that for each index subset of size $r$, each of the $2^r$ possible combinations of values appears in some vector of the set. It is known that an $(n,r)$-universal set can be constructed in \classFPT-time with the parameter $r$. The best construction is due to Naor, Schulman and Srinivasan~\cite{naor1995splitters}. They obtained an $(n,r)$-universal set of size $2^r\cdot r^{O(\log r)} \log n$, and proved that the elements of the sets  can be listed in time that is linear in the size of the set. 

To apply this technique in our case, we construct an $(n,r)$-universal set $\mathcal{U}$ for $r=\min\{4kd^2,n\}$. Then we let $V(G)=\{v_1,\ldots,v_n\}$ and for each element of  $\mathcal{U}$, i.e., a binary vector $x=(x_1,\ldots,x_n)$, we consider the coloring of $G$ induced by $x$; a vertex $v_i$ is colored red if $x_i=1$, and $v_i$ is blue otherwise.   Then if $(G,\delta,d,k)$ has a solution $(U,D,A)$, then for one of these colorings, the vertices of $N_G^2[Z]\cup N_G^3[U]$ are colored correctly with respect to the solution, i.e., the vertices of $U$ are red and all other vertices of the set are blue.
In this case the instance has a colorful solution, and our algorithm finds it.

\medskip
\noindent
{\bf Running time.} We conclude 
the proof of Theorem~\ref{thm:fpt} by the running time analysis. 

Clearly, the vertex deletion rule can be applied in polynomial time. The branching rule produces at most $(2(k+d)+1)^k$ instances of the problem and can be implemented in time $2^{O(k\log(k+d)}\cdot poly(n)$. Then the stopping and isolates removing rules can be done in polynomial time.   Whenever we apply the small instance rule, we have an instance with the graph with at most $3kd^2-1$ edges with at most $2k$ isolated vertices. Hence, the graph has at most $2(3kd^2-1)+2k$ vertices. By Lemma~\ref{lem:brute}, the problem can be solved in time $2^{O(kd^2)}\cdot poly(n)$. Hence, on the preprocessing stage we either solve the problem or 
produce at most $(2(k+d)+1)^k$ new instances of the problem in time $2^{O(kd^2+k\log k)}\cdot poly(n)$.

For each coloring of $G$, we can construct the partition $R_0,\ldots,R_t$ of $R$ in polynomial time. Then the dynamic programming algorithm produces the table $T_t$ in time $2^{O(k\log k)}\cdot poly(n)$. Using the information in $T_t$, we solve the problem in time  $2^{O(\sqrt{k})}\cdot poly(n)$ because $T_t$ has at most $2^{O(\sqrt{k})}$ records. Hence, for each coloring the problem is solved time   $2^{O(k\log k)}\cdot poly(n)$. We either consider at most  $N=2^{4kd^2}$ random colorings or at most $2^r\cdot r^{O(\log r)} \log n$ elements of an $(n,r)$-universal set for $r\leq 4kd^2$. In the both cases we have that we can solve the problem in time $2^{O(kd^2+k\log k)}\cdot poly(n)$. Since we solve the problem for at most $(2(k+d)+1)^k$ instances obtained on the preprocessing stage, we have that the total running time is $2^{O(kd^2+k\log k)}\cdot poly(n)$.

\subsection{The case $S=\{\text{vertex deletion},\text{edge addition}\}$}
We conclude the section by the observation that a simplified variant of our algorithm solves {\sc Editing to a Graph of Given Degrees} for $S=\{\text{vertex deletion},\text{edge addition}\}$.
We have to modify the branching rule to exclude edge deletions. Also on the preprocessing stage we don\rq{}t need the small instance rule. On the random separation stage, we simplify the algorithm by the observation that we have a colorful solution if and only if the table $T_t$ has the zero element. It gives us the following corollary.

\begin{corollary}\label{cor:fpt}
{\sc Editing to a Graph of Given Degrees} can be solved in time $2^{O(kd^2+k\log k)}\cdot poly(n)$ for $n$-vertex graphs for $S=\{\text{vertex deletion},\text{edge addition}\}$.
\end{corollary}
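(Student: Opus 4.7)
The plan is to follow the algorithm of Theorem~\ref{thm:fpt} with three simplifications tailored to the absence of edge deletions. First, Lemma~\ref{lem:del} still holds, so the vertex deletion rule carries over verbatim. The branching rule must be modified: when $d_G(v) > \delta(v)$, since we cannot delete edges incident to $v$, any solution must delete some vertex of $N_G[v]$. Hence we branch on at most $d+k+1$ choices per step, still within the $(d+k+1)^k = 2^{O(k\log(k+d))}$ budget. The \emph{small instance rule} and the structural Lemmas~\ref{lem:alt} and~\ref{lem:struct} can be dropped entirely: they were designed to replace long $(D,A)$-alternating trails by short paths $u_i x_i y_i v_i$ built around a deleted matching edge $x_i y_i$, a construction that is vacuous when $D=\emptyset$.

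After preprocessing we may assume $d_G(v)\leq \delta(v)$ for all $v$, and we pass directly to the random separation stage. The coloring, the equivalence classes $R_0,\ldots,R_t$, the neighborhood sets $B_0,\ldots,B_t$, and the dynamic programming tables $T_0,\ldots,T_t$ are defined and constructed exactly as in Theorem~\ref{thm:fpt}. The essential simplification is that any colorful solution must have $D=\emptyset$, so extraction of a solution from $T_t$ reduces to a single check: I claim that $(G,\delta,d,k)$ admits a colorful solution if and only if $T_t$ contains the zero element. The forward direction is precisely conditions (vi)--(ix) in the definition of the zero element applied to $(U,\emptyset,A)$; the reverse direction is Lemma~\ref{lem:DP}(2), which additionally guarantees that the stored witnesses $U,A$ already yield a valid solution $(U,\emptyset,A)$. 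No matching-based post-processing is required, and Lemma~\ref{lem:colorful} is replaced by this one-line check.

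The probabilistic analysis of Lemma~\ref{lem:N-bound} transfers unchanged, since the bound $|N_G^2[Z]\cup N_G^3[U]|\leq 4kd^2$ depends only on degree caps and on $|U|,|Z|\leq O(k)$, which still hold. Hence $N=2^{4kd^2}$ random colorings suffice, and derandomization via an $(n,4kd^2)$-universal set from Naor, Schulman and Srinivasan works verbatim. Multiplying the $2^{O(k\log(k+d))}$ branching factor, the $2^{O(kd^2)}$ colorings, and the $2^{O(k\log k)}\cdot poly(n)$ cost per coloring (the table sizes are still $2^{O(\sqrt{k})}$ and each is built in $2^{O(k\log k)}\cdot poly(n)$ time as in Lemma~\ref{lem:DP}), we obtain the claimed $2^{O(kd^2+k\log k)}\cdot poly(n)$ bound. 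The only subtlety I expect to have to verify carefully is the forward equivalence for the zero element, namely that a colorful solution $(U,\emptyset,A)$ really does meet conditions (vi)--(ix) with the coloring-induced partition of $U$ into classes $R_i$; this follows from the definition of a colorful solution but is the one place where the simplification deserves a written justification.
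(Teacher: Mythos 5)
Your proposal is correct and follows essentially the same route as the paper: modify the branching rule to delete only vertices in $N_G[v]$, drop the small instance rule (and with it the matching-based extraction via Lemmas~\ref{lem:alt} and~\ref{lem:struct}), and replace the post-processing of $T_t$ by the observation that a colorful solution exists if and only if $T_t$ contains the zero element, which is exactly Lemma~\ref{lem:DP}(2) specialized to $i=t$. The running-time and derandomization arguments transfer as you describe.
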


\section{Kernelization lower bound for Editing to a Graph of Given Degrees}\label{sec:no-ker}
In this section we show that it is unlikely that \textsc{Editing to a Graph of Given Degrees}  parameterized by $k+d$ has a polynomial kernel if  $\{\text{vertex deletion},\text{edge addition}\}\subseteq S$. The proof uses the cross-composition technique introduced by Bodlaender, Jansen and Kratsch~\cite{BodlaenderJK14}.
We need the following definitions (see~\cite{BodlaenderJK14}).

Let $\Sigma$ be a finite alphabet. An equivalence relation $\mathcal{R}$ on the set of strings $\Sigma^*$ is called a \emph{polynomial equivalence relation} if the following two conditions hold:
\begin{itemize}
\item[i)] there is an algorithm that given two strings $x,y\in\Sigma^*$ decides whether $x$ and $y$ belong to
the same equivalence class in time polynomial in $|x|+|y|$,
\item[ii)] for any finite set $S\subseteq\Sigma^*$, the equivalence relation $\mathcal{R}$ partitions the elements of $S$ into a
number of classes that is polynomially bounded in the size of the largest element of $S$.
\end{itemize}

Let $L\subseteq\Sigma^*$ be a language, let $\mathcal{R}$ be a polynomial
equivalence relation on $\Sigma^*$, and let $\mathcal{Q}\subseteq\Sigma^*\times\mathbb{N}$   
be a parameterized problem.  An \emph{OR-cross-composition of $L$ into $\mathcal{Q}$} (with respect to $\mathcal{R}$) is an algorithm that, given $t$ instances $x_1,x_2,\ldots,x_t\in\Sigma^*$ 
of $L$ belonging to the same equivalence class of $\mathcal{R}$, takes time polynomial in
$\sum_{i=1}^t|x_i|$ and outputs an instance $(y,k)\in \Sigma^*\times \mathbb{N}$ such that:
\begin{itemize}
\item[i)] the parameter value $k$ is polynomially bounded in $\max\{|x_1|,\ldots,|x_t|\} + \log t$,
\item[ii)] the instance $(y,k)$ is a YES-instance for $\mathcal{Q}$ if and only if at least one instance $x_i$ is a YES-instance for $L$ for $i\in\{1,\ldots,t\}$.
\end{itemize}
It is said that $L$ \emph{OR-cross-composes into} $\mathcal{Q}$ if a cross-composition
algorithm exists for a suitable relation $\mathcal{R}$.

In particular, Bodlaender, Jansen and Kratsch~\cite{BodlaenderJK14} proved the following theorem.

\begin{theorem}[\cite{BodlaenderJK14}]\label{thm:BJK}
If an \classNP-hard language $L$ OR-cross-composes into the parameterized problem $\mathcal{Q}$,
then $\mathcal{Q}$ does not admit a polynomial kernelization unless
$\classNP\subseteq\classCoNP/\text{\rm poly}$.
\end{theorem}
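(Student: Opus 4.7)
This statement is a cited framework result from Bodlaender, Jansen and Kratsch, so the present paper does not set out to reprove it; my plan here is only to outline the standard argument so that one understands the ingredients and can see how Theorem~\ref{thm:BJK} will be wielded in Section~\ref{sec:no-ker}.

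The plan is to argue by contradiction: assume $\mathcal{Q}$ does admit a polynomial kernelization $K$ of size bound $p(k)$, and combine this kernelization with the OR-cross-composition to build a polynomial-size OR-distillation of $L$. Given $t$ strings $x_1,\ldots,x_t \in \Sigma^*$, first partition them by the equivalence relation $\mathcal{R}$; by the definition of polynomial equivalence relation, the number of classes present is polynomially bounded in $\max_i |x_i|$, so this grouping is efficient. Within each class, I would invoke the OR-cross-composition to obtain an instance $(y,k)$ of $\mathcal{Q}$ whose parameter $k$ is polynomial in $\max_i|x_i| + \log t$, and then run $K$ on $(y,k)$ to get an equivalent instance of size polynomial in $\max_i|x_i|+\log t$. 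Taking the disjunction over the classes (via a padded direct sum, or by viewing the concatenation as a single bitstring) yields a string whose length is polynomial in $\max_i |x_i|$ --- crucially, independent of $t$ --- and which encodes YES exactly when some $x_i \in L$.

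The main obstacle, and the genuinely non-trivial ingredient, is then the invocation of the Fortnow--Santhanam theorem (or, equivalently, Drucker's AND/OR distillation bound): an \classNP-hard language $L$ cannot admit such an OR-distillation unless $\classNP \subseteq \classCoNP/\text{\rm poly}$. This part is purely information-theoretic, not combinatorial; it is a lower bound on how much one can compress the OR of many independent \classNP-instances without losing the information of which of them is satisfiable, and it relies on an advice-efficient simulation of the distillation algorithm. Once that theorem is in hand the implication is immediate, and the real content of the proof is cleanly split: the cross-composition plus a polynomial kernel compose into a distillation, and then the Fortnow--Santhanam hammer finishes the job. In the remainder of the paper I will only apply Theorem~\ref{thm:BJK} as a black box, the work being concentrated in designing an explicit OR-cross-composition from an \classNP-hard source problem into \textsc{Editing to a Graph of Given Degrees} parameterized by $k+d$.
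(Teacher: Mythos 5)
The paper does not prove this statement: it is quoted verbatim as an external framework result from Bodlaender, Jansen and Kratsch~\cite{BodlaenderJK14}, and correctly so, since re-proving it would be out of scope. So there is no ``paper proof'' to compare against; the only thing to assess is whether your sketch of the standard argument is sound.

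Your decomposition --- partition by $\mathcal{R}$, cross-compose within each class, kernelize, OR over the polynomially many classes, and then invoke Fortnow--Santhanam --- is the right skeleton. One step is stated too quickly, though. You assert that the concatenated output has length polynomial in $\max_i|x_i|$ and ``crucially, independent of $t$.'' As written this does not follow: the cross-composition only guarantees that the parameter $k$ is polynomial in $\max_i|x_i| + \log t$, so after kernelization each per-class instance has size $p(k) = \text{poly}(\max_i|x_i| + \log t)$, and the concatenation over $\text{poly}(\max_i|x_i|)$ classes still carries a $\text{polylog}(t)$ factor. To close this gap one either (a) first discards duplicate input strings, so that $t \le |\Sigma|^{\max_i|x_i|}$ and hence $\log t = O(\max_i|x_i|\cdot\log|\Sigma|)$, which makes the bound genuinely polynomial in $\max_i|x_i|$, or (b) invokes the sharper form of the Fortnow--Santhanam/Dell--van Melkebeek compression lower bound, which already tolerates a polylogarithmic-in-$t$ blow-up in the distillation output. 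Bodlaender, Jansen and Kratsch use route (b); either works, but the step needs to be said. With that repaired, your account is a faithful summary of the argument the paper is relying on as a black box.
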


It is well-known that the \textsc{Clique} problem is \classNP-complete for regular graphs~\cite{GareyJ79}. We need a special variant of \textsc{Clique} for regular graphs where a required clique is small with respect to the degree.
\begin{center}
\begin{boxedminipage}{.99\textwidth}
\textsc{Small Clique in a Regular Graph}\\
\begin{tabular}{ r l }
\textit{~~~~Instance:} & Positive integers $d$ and $k$, $k\geq 2$, $k^2<d$, and
                                  a $d$-regular\\ & graph $G$.\\
\textit{Question:} & Is there a clique with $k$ vertices in $G$?\\
\end{tabular}
\end{boxedminipage}
\end{center}

\begin{lemma}\label{lem:clique}
\textsc{Small Clique in a Regular Graph} is \classNP-complete.
\end{lemma}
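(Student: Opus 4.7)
Membership in \classNP\ is immediate: a $k$-element subset of vertices is a polynomial-size certificate whose cliqueness is verifiable in polynomial time. For \classNP-hardness, my plan is to reduce from the \classNP-complete \textsc{Clique} problem on regular graphs. The only awkward feature of the target problem is the side condition $k^2 < d$, which I view as the main obstacle: a generic $r$-regular \textsc{Clique} instance may have $r$ much smaller than $k^2$, and I need to boost the regularity to exceed $k^2$ without changing the clique number or blowing the instance up super-polynomially.

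I would accomplish the boost via a lexicographic-product (blow-up) construction. First I would dispose of the trivial cases: if $k \leq 2$ or $r = 0$, regular \textsc{Clique} is decidable in polynomial time. Otherwise, given an $r$-regular graph $G$ with $k \geq 3$ and $r \geq 1$, set $m = \lceil (k^2+1)/r \rceil$ and form $G' = G[\overline{K_m}]$: vertex set $V(G) \times [m]$, with $(u,i)(v,j) \in E(G')$ iff $uv \in E(G)$. A direct count shows each vertex of $G'$ has exactly $rm$ neighbors, so $G'$ is $d$-regular for $d = rm \geq k^2 + 1$, meeting the required constraint $k^2 < d$.

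The critical property to verify is $\omega(G') = \omega(G)$. Two distinct copies $(v,i),(v,j)$ of the same vertex are non-adjacent in $G'$, so the first-coordinate projection of any clique of $G'$ consists of pairwise distinct vertices forming a clique in $G$ of the same size; conversely, any clique of $G$ lifts to a clique of the same size in $G'$ by choosing one representative per vertex. Hence $G'$ has a $k$-clique iff $G$ does, and $(G',k,d)$ is a valid instance of \textsc{Small Clique in a Regular Graph}. Since $m = O(k^2)$, the reduction is polynomial in $|V(G)| + k$, which finishes the \classNP-hardness proof.
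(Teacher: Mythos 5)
Your proof is correct, and it takes a genuinely different route from the paper's. The paper boosts the degree \emph{additively}: it forms the Cartesian product $H = G \times K_{k^2,k^2}$, which is $(d+k^2)$-regular, and relies on the fact that cliques in a Cartesian product are confined to single fibers, so $\omega(H) = \max(\omega(G),2)$ and, for $k \geq 2$, a $k$-clique in $H$ corresponds to one in $G$. Your construction boosts the degree \emph{multiplicatively}: the blow-up $G[\overline{K_m}]$ is $rm$-regular, with $m$ tuned so that $rm > k^2$, and the clique number is preserved because distinct copies of the same vertex are non-adjacent, giving an injective projection of any clique back to $G$. Both reductions are elementary and polynomial, and both implicitly require $r \geq 1$ (the paper also assumes $d \geq 1$ when introducing $G$). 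The paper's approach avoids the arithmetic of choosing $m$ and the separate $r=0$ case, since the additive pad is always exactly $k^2$; your approach trades that for a more self-contained clique-preservation argument that does not invoke any structural fact about Cartesian products.
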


\begin{proof}
Recall that the \textsc{Clique} problem asks for a graph $G$ and a positive integer $k$, whether $G$ has a clique with $k$ vertices.  \textsc{Clique} is known to be \classNP-complete for regular graphs~\cite{GareyJ79} (it can be observed, e.g., that the dual \textsc{Independent Set} problem is \classNP-complete for cubic graphs).  
To show \classNP-hardness of \textsc{Small Clique in a Regular Graph}, we reduce from \textsc{Clique} for regular graphs.

Recall that the Cartesian product of graphs $G$ and $H$ is the graph $G\times H$ with the vertex set $V(G)\times V(H)$ such that $(u,v),(u\rq{},v\rq{})\in V(G)\times V(H)$ are adjacent in $G\times H$ if and only if $u=u\rq{}$ and $vv\rq{}\in E(H)$ or $v=v\rq{}$ and $uu\rq{}\in E(G)$.

Let $G$ be a $d$-regular graph, $d\geq 1$,  and let $k$ be a positive integer. We construct $H=G\times K_{k^2,k^2}$. For any $v\in V(H)$, $d_H(v)=d+k^2$, i.e., $H$ is a $d\rq{}=d+k^2$-regular graph and $d\rq{}>k^2$. It remains to observe that $H$ has a clique of size $k$ if and only if $H$ has a clique of size $k$.
\end{proof}

Now we are ready to prove the main result of the section.

\begin{theorem}\label{thm:no-ker}
\textsc{Editing to a Graph of Given Degrees}  parameterized by $k+d$ has no polynomial kernel unless  $\classNP\subseteq\classCoNP/\text{\rm poly}$ if $\{\text{vertex deletion},\text{edge addition}\}\subseteq S$.
\end{theorem}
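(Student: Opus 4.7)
The plan is to apply Theorem~\ref{thm:BJK} via an OR-cross-composition from \textsc{Small Clique in a Regular Graph}, which is \classNP-complete by Lemma~\ref{lem:clique}. As the polynomial equivalence relation $\mathcal{R}$, I would group instances $(G,d,k)$ of \textsc{Small Clique in a Regular Graph} by their triple $(n,d,k)$; this yields polynomially many classes per maximum instance size, and any pair of instances with different $(n,d,k)$ can be composed trivially by outputting a canonical NO-instance.

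Given $t$ $\mathcal{R}$-equivalent instances $G_1,\ldots,G_t$, each $d$-regular on $n$ vertices with $k^2<d$ and asking for a $k$-clique, the composition would construct a single instance $(H,\delta,d^*,k^*)$ of \textsc{Editing to a Graph of Given Degrees} starting from the disjoint union $G_1\sqcup\cdots\sqcup G_t$, then appending a \emph{selector gadget} on top of it. The gadget together with $\delta$ is engineered so that (i) any editing within the budget must perform exactly $k$ vertex deletions lying inside a single $G_i$ and (ii) those $k$ vertices must induce a clique in $G_i$. The budget would be $k^*=k(d-k+3)/2$ plus an $O(\log t)$ overhead for traversing the gadget, with $d^*$ kept polynomial in $n+\log t$.

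The correctness has two halves. For completeness, given a $k$-clique $C\subseteq V(G_i)$ I would delete $C$ (using $k$ of the budget, which simultaneously satisfies the gadget's ``selection'' constraint for index $i$) and then add $k(d-k+1)/2$ edges inside $N_{G_i}(C)$ to repair the degree deficits at $N_{G_i}(C)\setminus C$; the hypothesis $k^2<d$ supplies enough non-edges locally to host these additions without creating excess. For soundness, suppose $(U,D,A)$ is a candidate solution with $|U|+|D|+|A|\le k^*$. Writing $U_i=U\cap V(G_i)$ and $k_i=|U_i|$, a degree-sum count together with minimality of $|A|$ gives a lower bound of roughly $k+\tfrac{1}{2}\bigl(kd-2\sum_i|E(G_i[U_i])|\bigr)$ on the total number of edits. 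By convexity of $x\mapsto\binom{x}{2}$, $\sum_i|E(G_i[U_i])|\le\sum_i\binom{k_i}{2}\le\binom{k}{2}$, with equality only when some $k_i=k$ and $U_i$ induces a $k$-clique in $G_i$; matching this against the tight budget $k^*$ forces both properties, hence yields a $k$-clique in some $G_i$. One must also check that no cheaper ``bypass'' editing survives -- deleting a gadget vertex of high degree would cascade many deficits, and offloading deficit to vertices of other $G_j$'s would create fresh excess there -- but each such alternative can be ruled out by the same degree-sum counting.

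The main obstacle is designing the selector gadget while keeping $d^*+k^*$ polynomial in $n+\log t$: a single ``universal'' vertex adjacent to every $v\in\bigsqcup_iV(G_i)$ would play the selector role beautifully but has degree $\Theta(tn)$, blowing up $d^*$. Since every vertex the editing cannot afford to delete must have degree at most $d^*+k^*$, one is forced to distribute the selection across $O(\log t)$ auxiliary vertices of bounded degree -- for instance, a balanced binary tree whose root carries a single unit of excess and whose internal degree targets propagate a deletion cascade down to exactly one leaf, which in turn is attached to $G_i$ and ``triggers'' the clique-forcing constraints on the $G_i$-side. Showing that this cascade is actually the cheapest way to resolve the root's excess, and that no short-cut via edge additions within the tree can undercut it, will be the technical heart of the argument.
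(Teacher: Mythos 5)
Your high-level plan matches the paper exactly: OR-cross-composition (Theorem~\ref{thm:BJK}) from \textsc{Small Clique in a Regular Graph} (Lemma~\ref{lem:clique}), with the equivalence relation grouping instances by $(n,d,k)$ and starting from the disjoint union $G_1\sqcup\cdots\sqcup G_t$. The divergence begins at the gadget, and there the proposal has two genuine gaps.

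\textbf{The completeness step does not go through as stated.} You propose deleting a $k$-clique $C$ in some $G_i$ and then adding $k(d-k+1)/2$ edges \emph{inside} $N_{G_i}(C)$ to repair all deficits. This requires first that the total deficit $\sum_{v\in N_{G_i}(C)}|N_{G_i}(v)\cap C|=k(d-k+1)$ be even, which is false in general (take $k=3$, $d=11$: the sum is $27$ and $k^2<d$ holds). Second, even when the parity works out, you need the deficit sequence to be realizable by a simple graph on $N_{G_i}(C)$ using only non-edges of $G_i$, and $k^2<d$ does not guarantee this; $N_{G_i}(C)$ could be small and already dense. The paper avoids both issues by a different gadget: a disjoint clique on $p+k+1$ vertices with $p=k(d-k+1)$, where each of the $p$ vertices $u_1,\ldots,u_p$ is prescribed degree $p+k+1$ (one unit short) and the remaining $k+1$ vertices $w_j$ are prescribed degree $p+k$ (exactly met). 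One then repairs each unit of deficit in $N_{G_i}(C)$ by an edge to a distinct $u_h$, using exactly $p$ additions plus $k$ deletions, with budget $k'=k(d-k+2)$. No pairing, no parity, no realizability concerns.

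\textbf{The selector gadget is not needed, and introducing it breaks your soundness argument.} You add $O(\log t)$ budget overhead for a tree that "selects" one $G_i$, and you acknowledge that proving the cascade is optimal is the unresolved technical heart. In the paper, the gadget has \emph{no edges at all} to the $G_i$'s; selection is implicit, because the $G_i$'s are mutually disjoint and the $p$ receptors can absorb exactly $p$ deficit units, which (together with the counting that forces $|U|=k$ and $D=\emptyset$) forces all $k$ deletions into a single copy and forces them to be a clique. Moreover, your soundness sketch hinges on the budget $k^*$ being \emph{exactly} tight so that the convexity bound $\sum_i\binom{k_i}{2}\le\binom{k}{2}$ forces concentration into one $G_i$; the additive $O(\log t)$ slack destroys this tightness, so even if your selector gadget existed you would need a separate argument that the slack cannot be diverted. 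The paper's $k'$ and $d'$ are polynomials in $k$ and $d$ alone, with no $t$-dependence, so its accounting closes cleanly. Your convexity idea for lower-bounding the edit cost is a nice observation and could be used to tighten the paper's argument, but it does not by itself rescue a construction with a loose budget.
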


\begin{proof}
 First, we consider the case when the all three editing operations are allowed, i.e., $S=\{\text{vertex deletion},\text{edge deletion},\text{edge addition}\}$.

We show that \textsc{Small Clique in a Regular Graph} OR-cross-composes into \textsc{Editing to a Graph of Given Degrees}. 

We say that that two instances $(G_1,d_1,k_1)$ and $(G_2,d_2,k_2)$ of \textsc{Small Clique in a Regular Graph} are \emph{equivalent} if $|V(G_1)|=|V(G_2)|$, $d_1=d_2$ and $k_1=k_2$. Notice that this is a polynomial equivalence relation.

Let $(G_1,d,k),\ldots,(G_t,d,k)$ be equivalent instances of \textsc{Small Clique in a Regular Graph}, $n=|V(G_i)|$ for $i\in\{1,\ldots,t\}$. We construct the instance 
$(G\rq{},\delta,d\rq{},k\rq{})$ of \textsc{Editing to a Graph of Given Degrees} as follows. 
\begin{itemize}
\item Construct copies of $G_1,\ldots,G_t$.
\item Construct $p=k(d-k+1)$ pairwise adjacent vertices $u_1,\ldots,u_p$.
\item Construct $k+1$ pairwise adjacent vertices $w_0,\ldots,w_k$ and join each $w_j$ with each $u_h$ by an edge. 
\item Set $\delta(v)=d$ for $v\in V(G_1)\cup\ldots\cup V(G_t)$, $\delta(u_i)=p+k+1$ for $i\in\{1,\ldots,p\}$, and $\delta(w_j)=p+k$ for $j\in\{0,\ldots,k\}$.
\item Set $d\rq{}=p+k+1$ and $k\rq{}=k(d-k+2)$.
\end{itemize}
Denote the obtained graph $G\rq{}$.

Clearly, $k\rq{}+d\rq{}=O(n^2)$, i.e., the parameter value  is polynomially bounded  in $n$. We show that  $(G\rq{},\delta,d\rq{},k\rq{})$ is a YES-instance of \textsc{Editing to a Graph of Given Degrees} if and only if $(G_i,k,d)$ is a YES-instance of  \textsc{Small Clique in a Regular Graph} for some $i\in\{1,\ldots,t\}$.

Suppose that $(G_i,k,d)$ is a YES-instance of  \textsc{Small Clique in a Regular Graph} for some $i\in\{1,\ldots,t\}$. Then $G_i$ has a clique $K$ of size $k$. Let $\{v_1,\ldots,v_q\}=N_{G_i}(K)$. 
For $j\in\{1,\ldots,q\}$, let $d_j=|N_{G_i}(v_j)\cap K|$. Because $G_i$ is a $d$-regular graph, $d_1+\ldots+d_q=k(d-k+1)=p$. We construct the solution $(U,D,A)$ for $(G\rq{},\delta,d\rq{},k\rq{})$ as follows. We set $U=K$ in the copy of $G_i$, and let $D=\emptyset$. Observe that to satisfy the degree conditions, we have to add $d_j$ edges incident to each $v_j$ in the copy of $G_i$ and add one edge incident to each $u_h$. 
To construct $A$, we consecutively consider the vertices $v_j$ in the copy of $G_i$ for $j=1,\ldots,q$. For each $v_j$, we greedily select $d_j$ vertices $x_1,\ldots,x_{d_j}$ in $\{u_1,\ldots,u_p\}$ that were not selected before and add $v_jx_1,\ldots,v_jx_{d_j}$ to $A$. It is straightforward  to verify that $(U,D,A)$ is a solution and $|U|+|D|+|A|=k+p=k\rq{}$.   

Assume now that $(U,D,A)$ is a solution for  $(G\rq{},\delta,d\rq{},k\rq{})$.

We show that $U\cap(\{u_1,\ldots,u_p\}\cup\{w_0,\ldots,w_k\})=\emptyset$. To obtain a contradiction, assume that $|U\cap(\{u_1,\ldots,u_p\}\cup\{w_0,\ldots,w_k\})|=h>0$.
Let $X=(\{u_1,\ldots,u_p\}\cup\{w_0,\ldots,w_k\})\setminus U$. Because $\{u_1,\ldots,u_p\}\cup\{w_0,\ldots,w_k\}$ has $k(d-k+1)+k+1=k\rq{}+1$ vertices, $X$ has $k\rq{}+1-h>0$ vertices.
Let $G\rq{}\rq{}=G\rq{}-U$.
Observe that for $v\in X$, $\delta(v)-d_{G\rq{}\rq{}}(v)\geq h$. Because the vertices of $X$ are pairwise adjacent, the set $A$ has at least $|X|h=(k\rq{}+1-h)h$ elements. 
But $|A|\leq k\rq{}-|U|\leq k\rq{}-h$. Because $(k\rq{}-h+1)h>k\rq{}-h$, we obtain a contradiction. 

Next, we claim that $|U|=k$ and $D=\emptyset$. Because $U\cap(\{u_1,\ldots,u_p\}\cup\{w_0,\ldots,w_k\})=\emptyset$, $\sum_{j=1}^p(\delta(u_j)-d_{G\rq{}}(u_j))=p$ and the vertices $u_1,\ldots,u_p$ are pairwise adjacent, $A$ contains at least $p$ elements. Moreover, $A$ has at least $p$ edges with one end-vertex in $\{u_1,\ldots,u_p\}$ and another in $V(G_1)\cup\ldots\cup V(G_t)$ for the copies of $G_1,\ldots,G_t$ in  $(G\rq{},\delta,d\rq{},k\rq{})$.
Hence, $|U|+|D|\leq k\rq{}-|A|\leq k\rq{}-p=k$. Suppose that $|U|=s<k$ and $|D|=h$. Let also $D\rq{}=D\cap(E(G_1)\cup\ldots\cup E(G_t))$ and $h\rq{}=|D\rq{}|$. Let $G\rq{}\rq{}=G\rq{}-U-D\rq{}$. 
Because $G_1,\ldots,G_t$ are $d$-regular, $\sum_{v\in V(G\rq{}\rq{})}(\delta(v)-d_{G\rq{}\rq{}}(v))\leq sd+2h\rq{}\leq sd+2h\leq sd+2(k-s)$. Therefore, $A$ contains at most $sd+2(k-s)$ edges with one end-vertex in $V(G_1)\cup\ldots\cup V(G_t)$. Notice that $sd+2(k-s)\leq (k-1)d+2$ because $d>k^2\geq 4$.
But $p-(k-1)d-2=k(d-k+1)-(k-1)d-2=d-k^2+k-2>0$ as $d>k^2$, and we have no $p$ edges with one end-vertex in $\{u_1,\ldots,u_p\}$ and another in $V(G_1)\cup\ldots\cup V(G_t)$; a contradiction. Hence, $|U|=k$ and $D=\emptyset$.

Now we show that $U$ is a clique. Suppose that $U$ has at least two non-adjacent vertices. Let $G\rq{}\rq{}=G\rq{}-U$. Because $G_1,\ldots,G_t$ are $d$-regular, $\sum_{v\in V(G\rq{}\rq{})}(\delta(v)-d_{G\rq{}\rq{}}(v))\geq k(d-k+1)+2=p+2$. 
Recall that $A$ has at least $p$ edges with one end-vertex in $\{u_1,\ldots,u_p\}$ and another in $V(G_1)\cup\ldots\cup V(G_t)$. Because $|U|=k$ and $k\rq{}=p+k$, $A$ consists of 
$p$ edges with one end-vertex in $\{u_1,\ldots,u_p\}$ and another in $V(G_1)\cup\ldots\cup V(G_t)$. But to satisfy the degree restrictions for the vertices of $V(G_1)\cup\ldots\cup V(G_t)$, we
need at least $p+2$ such edges; a contradiction.

We have that $U\subseteq V(G_1)\cup\ldots\cup V(G_t)$ is a clique of size $k$. Because the copies of $G_1,\ldots,G_t$ in  $(G\rq{},\delta,d\rq{},k\rq{})$ are disjoint, $U$ is a clique in some $G_i$.

It remains to apply Theorem~\ref{thm:BJK}. 
Because \textsc{Small Clique in a Regular Graph} is \classNP-complete by Lemma~\ref{lem:clique},  \textsc{Editing to a Graph of Given Degrees}  parameterized by $k+d$ has no polynomial kernel unless  $\classNP\subseteq\classCoNP/\text{\rm poly}$.

To prove the theorem for $S=\{\text{vertex deletion},\text{edge addition}\}$, it is sufficient to observe that for the constructed instance $(G\rq{},\delta,d\rq{},k\rq{})$ of \textsc{Editing to a Graph of Given Degrees}, any solution $(U,D,A)$ has $D=\emptyset$, i.e., edge deletions are not used. Hence, the same arguments prove the claim. 
\end{proof}

\section{Conclusion}
We proved that \textsc{Editing to a Graph of Given Degrees} is \classFPT\ when parameterized by $k+d$ for $\{\text{vertex deletion}, \text{edge addition}\}\subseteq S\subseteq\{\text{vertex deletion},\text{edge deletion},\text{edge addition}\}$, but does not admit a polynomial kernel. Our algorithm runs in time $2^{O(kd^2+k\log k)}\cdot poly(n)$ for $n$-vertex graph. Hence, it is natural to ask whether this running time could be improved. Another open question is whether the same random separation approach could be applied for more general variants of the problem. Recall that  Mathieson and Szeider~\cite{MathiesonS12} proved that the problem is \classFPT\ for the case  when vertices and edges have costs and the degree constraints are relaxed: for each $v\in V(G')$, $d_{G'}(v)$ should be in a given set $\delta(v)\subseteq \{1,\ldots,d\}$. It would be interesting to construct a feasible algorithm for this case. 
Notice that a solution $(U,D,A)$ can have a more complicated structure if $U=\emptyset$. In particular, we cannot claim that $H(D,A)$ can be covered by $(D,A)$-alternating trails.
Some interesting results in this direction were recently obtained by Froese, Nichterlein and Niedermeier~\cite{FroeseNN14}.

\end{document}